\newtheorem{theorem}{Theorem}
\newtheorem{corollary}[theorem]{Corollary}
\newtheorem{lemma}[theorem]{Lemma}
\theoremstyle{definition}
\newtheorem{definition}{Definition}
\DeclareMathOperator{\poly}{poly}
\DeclareMathOperator{\polylog}{polylog}
\DeclareMathOperator{\dist}{dist}
\newenvironment{myabstract}
               {\list{}{\listparindent 1.5em%
                        \itemindent    \listparindent
                        \leftmargin    0pt
                        \rightmargin   0pt
                        \parsep        0pt}%
                \item\relax}
               {\endlist}
\newenvironment{mycover}
               {\list{}{\listparindent 0pt
                        \itemindent    \listparindent
                        \leftmargin    0pt
                        \rightmargin   0pt
                        \parsep        0pt}%
                \raggedright
                \item\relax}
               {\endlist}
\begin{document}

\vspace*{2ex}
\begin{mycover}
{\huge \bfseries Locally Optimal Load Balancing}
\bigskip
\bigskip

\textbf{Laurent Feuilloley}

\nolinkurl{laurent.feuilloley@ens-cachan.fr}
\medskip

{\small \'Ecole Normale Sup\'erieure de Cachan, France\par}
\medskip
{\small Helsinki Institute for Information Technology HIIT, \\ Department of Computer Science, Aalto University, Finland\par}

\bigskip

\textbf{Juho Hirvonen}

\nolinkurl{juho.hirvonen@aalto.fi}
\medskip

{\small Helsinki Institute for Information Technology HIIT, \\ Department of Computer Science, Aalto University, Finland\par}
\bigskip

\textbf{Jukka Suomela}

\nolinkurl{jukka.suomela@aalto.fi}
\medskip

{\small Helsinki Institute for Information Technology HIIT, \\ Department of Computer Science, Aalto University, Finland\par}

\end{mycover}
\bigskip
\begin{myabstract}
\noindent\textbf{Abstract.}
This work studies distributed algorithms for \emph{locally optimal load-balancing}: We are given a graph of maximum degree $\Delta$, and each node has up to $L$ units of load. The task is to distribute the load more evenly so that the loads of adjacent nodes differ by at most $1$.

If the graph is a path ($\Delta = 2$), it is easy to solve the \emph{fractional} version of the problem in $O(L)$ communication rounds, independently of the number of nodes. We show that this is tight, and we show that it is possible to solve also the \emph{discrete} version of the problem in $O(L)$ rounds in paths.

For the general case ($\Delta > 2$), we show that fractional load balancing can be solved in $\poly(L,\Delta)$ rounds and discrete load balancing in $f(L,\Delta)$ rounds for some function $f$, independently of the number of nodes.
\end{myabstract}

\thispagestyle{empty}
\setcounter{page}{0}
\newpage

\section{Introduction}

In this work, we introduce the problem of \emph{locally optimal load balancing}, and study it from the perspective of distributed algorithms.

\subsection{Locally optimal load balancing}\label{ssec:intro-def}

In this problem, we are given a graph $G = (V,E)$, and each node has up to $L$ units of load. The task is to distribute load more evenly so that the loads of adjacent nodes differ by at most $1$:
\begin{center}
    \includegraphics[page=1]{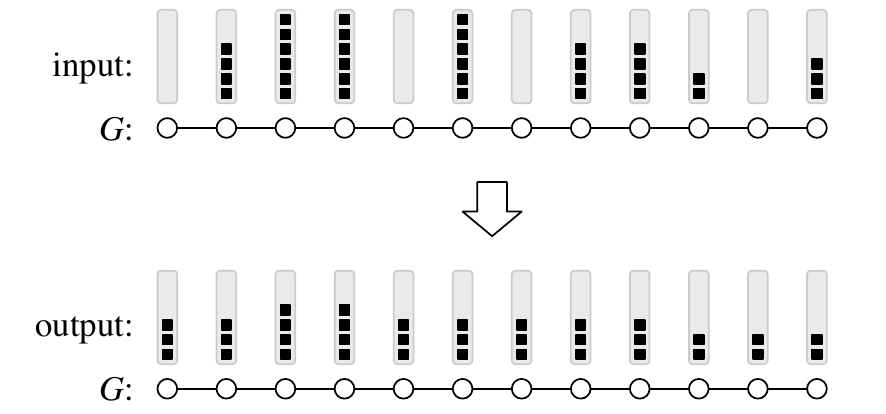}
\end{center}
That is, we want to \emph{smooth out} the load distribution, and find an \emph{equilibrium} in which no edge can improve its load distribution by selfishly moving load between its endpoints.

A bit more formally, in the load balancing problem we are given an input vector $x\colon V \to \{0,1,\dotsc,L\}$, and the task is to find an output vector $y\colon V \to [0,L]$ and a flow $f\colon E \to \mathbb{R}$ so that for each node $v \in V$ we have
\begin{equation}\label{eq:preserve}
    y(v) = x(v) + \sum_{(u,v) \in E} f(u,v),
\end{equation}
and for each edge $(u,v) \in E$ we have
\begin{equation}\label{eq:happy}
    | y(u) - y(v) | \le 1.
\end{equation}
Here is an illustration of the input and a feasible solution in the special case that $G$ is a path:
\begin{center}
    \includegraphics[page=2]{figs.pdf}
\end{center}
The problem comes in two natural flavours:
\begin{itemize}[noitemsep]
    \item \emph{Discrete load balancing}: $y(v) \in \{0,1,\dotsc,L\}$, i.e., load units are indivisible.
    \item \emph{Fractional load balancing}: $y(v) \in [0,L]$, i.e., load units can be divided.
\end{itemize}

\subsection{Centralised algorithms}\label{ssec:intro-central}

Both discrete and fractional load balancing can be solved easily with the following algorithm: Start with $y \gets x$ and $f \gets 0$. Then repeatedly pick an \emph{unhappy edge} $(u,v) \in E$ with $y(u) \ge y(v) + 2$, and move one unit of load from $u$ to $v$. This algorithm clearly converges, as the potential function $\sum_v y(v)^2$ decreases by at least $2$ in each step.

\subsection{Local solutions and local algorithms}

In the above centralised algorithm, we can think that each node $v$ has a pile of $y(v)$ tokens and we always move the topmost token. Then the height of a token decreases by at least one every time we move it; hence no individual token is moved more than $L$ times. This argument shows that there always exists a \emph{local solution} in which the final position of a token is always within distance $L$ from its origin; that is, each token can stay in its radius-$L$ neighbourhood.

In this work we are interested if the problem can be solved with a \emph{local algorithm}: is it possible to solve the problem so that we can compute the flow $f(u,v)$ for each edge $(u,v) \in E$ based on only the information that is available within distance $T$ from $(u,v)$ in graph $G$, for some~$T$. Equivalently, we want to know if there is a (deterministic) distributed algorithm in the usual LOCAL model that solves the load balancing problem in $T$ \emph{communication rounds}, or more succinctly, in \emph{time} $T$.

We will assume that the input graph has maximum degree $\Delta$. We are interested in local algorithms with a running time of $T = T(L,\Delta)$ that may depend on the maximum load $L$ and maximum degree $\Delta$, but is independent on the number of nodes $n = |V|$. Such an algorithm could be used to solve load balancing even in infinitely large graphs, and it would be very easy to e.g.\ parallelise such algorithms, as each part of the output can be determined based on its local neighbourhood.

\subsection{Smoothing with moving average}\label{ssec:intro-avg}

There is a special case that can be easily solved with a local algorithm in time $T = O(L)$: fractional load balancing in $2$-regular graphs (cycles and infinite paths). We can simply calculate the moving average of the input loads with a window of size $\Theta(L)$. More concretely, each node gives a fraction ${1/(2L+1)}$ of its input load to every node (including itself) in its radius-$L$ neighbourhood. This way the final loads of adjacent nodes differ by at most ${L/(2L+1)} < 1/2$ units. The same strategy can be applied easily in, e.g., $d$-dimensional grids.

Among others, the present work seeks to answer the following questions:
\begin{itemize}
    \item Is the running time of $O(L)$ optimal here, or could we solve it in time $o(L)$?
    \item Can we generalise this kind of smoothing algorithms to arbitrary graphs, and if so, what is the running time?
    \item Can we generalise this kind of smoothing algorithms to discrete load balancing?
\end{itemize}

\subsection{Contributions}

The contributions of this work are as follows. We start with a simple lower bound:
\begin{restatable}{theorem}{thmneglinear}\label{thm:neg-linear}
    Load balancing requires $\Omega(L)$ rounds, even in the case of paths and cycles.
\end{restatable}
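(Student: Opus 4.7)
The plan is to prove the $\Omega(L)$ lower bound by a standard indistinguishability argument on a long cycle; the same construction transfers to paths by placing the gadget far from the endpoints. I would compare three inputs on a cycle $C_n$ under a single fixed labelling, with $n$ taken large compared to $L$ (e.g.\ $n = 4L$): the all-$L$ input $\mathcal{X}_L$; the all-$0$ input $\mathcal{X}_0$; and a mixed input $\mathcal{X}_{\mathrm{mix}}$ obtained by splitting the cycle into two arcs $A,B$ of length $n/2$ and assigning load $L$ on $A$ and load $0$ on $B$.

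The first observation is that in $\mathcal{X}_L$ the only feasible output is $y\equiv L$: conservation of the total load around the cycle (which follows by summing \eqref{eq:preserve} over all $v$) together with the constraint $y(v)\in[0,L]$ forces $\sum_v y(v)=nL$ and hence $y(v)=L$ everywhere. Symmetrically, $y\equiv 0$ is the only feasible output for $\mathcal{X}_0$. This step uses nothing beyond \eqref{eq:preserve} and the output range, so it applies identically to the fractional and the discrete variants, which is why a single argument will cover both.

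Now suppose a deterministic algorithm runs in $T$ rounds, and let $(a_0,b_0)$ be one of the two boundary edges, with $a_0\in A$ and $b_0\in B$. Pick $u\in A$ at distance exactly $T$ from $a_0$ (walking into $A$); then the radius-$T$ neighbourhood of $u$ lies entirely inside $A$, so the labelled neighbourhood of $u$ together with its input is identical in $\mathcal{X}_{\mathrm{mix}}$ and in $\mathcal{X}_L$. Consequently the algorithm assigns the same flows on edges incident to $u$ in both instances, and $y_{\mathrm{mix}}(u)=L$. The symmetric node $w\in B$ at distance $T$ from $b_0$ satisfies $y_{\mathrm{mix}}(w)=0$ by the analogous comparison with $\mathcal{X}_0$.

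Finally, $u$ and $w$ are joined by a path of length $2T+1$ crossing the boundary edge, so iterating \eqref{eq:happy} along that path yields $|y_{\mathrm{mix}}(u)-y_{\mathrm{mix}}(w)|\le 2T+1$; since this difference equals $L$, we conclude $T\ge (L-1)/2=\Omega(L)$. I foresee no serious obstacle: the key point is that the all-constant inputs pin $y$ identically, which is precisely what \eqref{eq:preserve} together with the output range $[0,L]$ guarantees, and this is what makes the argument go through uniformly for both the fractional and the discrete variants. Adapting the construction from cycles to paths needs only enough padding between the $A/B$ interface and the two path endpoints for the indistinguishability step to remain intact.
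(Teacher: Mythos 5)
Your proof is correct and follows essentially the same route as the paper: compare the all-$0$ and all-$L$ instances with a step-function instance, use indistinguishability of radius-$T$ neighbourhoods to pin the outputs of a node in each region, and derive a contradiction with the constraint \eqref{eq:happy} iterated along the short path between them. The only notable difference is that you force $y\equiv L$ on the all-$L$ instance using the output cap $y(v)\le L$, whereas the paper uses only an averaging argument (some node must output at least $L$), so its bound also holds for the relaxed problem with unbounded nonnegative outputs; this costs nothing for the theorem as stated (modulo the harmless off-by-one that the flows incident to $u$ are determined by a radius-$(T{+}1)$ view).
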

Then we prove negative results for various algorithm families that have been used widely in the prior work. To this end, we define the following algorithm families:
\begin{itemize}
    \item \emph{Match-and-balance algorithms}: In each step, the algorithm finds a matching $M$ and balances the load (fully or partially) for each edge in $M$. More precisely, for each edge $(u,v) \in M$ with $y(u) > y(v)$, the algorithm increases the flow $f(u,v)$ by at most ${(y(u) - y(v))}/2$. For example, many natural distributed versions of the centralised algorithm from Section~\ref{ssec:intro-central} are of match-and-balance type.
    \item \emph{Careful algorithms}: In each round, for each edge $(u,v) \in E$, the algorithm increases or decreases $f(u,v)$ by at most $\poly(L)$. All match-and-balance algorithms are also careful algorithms.
    \item \emph{Oblivious algorithms}: The total amount of load moved from node $u$ to $v$ only depends on the initial load of $u$ and the distance between $u$ and $v$. For example, the moving average algorithm from Section~\ref{ssec:intro-avg} is oblivious.
\end{itemize}
We show that algorithms of any of these types cannot find a locally optimal load balancing efficiently (or at all):
\begin{restatable}{theorem}{thmnegmab}\label{thm:neg-mab}
    Any match-and-balance algorithm takes $\Omega(L^2)$ rounds in the worst case, even in paths and cycles.
\end{restatable}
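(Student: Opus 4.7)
The plan is to exhibit a smooth-but-unhappy instance on a cycle of length $n = L/2$ for which a Fourier-type potential must drop by $\Omega(L^2)$, while each match-and-balance step can change it by at most $O(1)$.

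Concretely, take the initial loads to be (a discretisation of) $x(v) = L/2 + (L/2)\sin(2\pi v/n)$ for $v = 0, 1, \dots, n-1$. Consecutive differences are bounded by $\pi L/n = 2\pi + O(1)$ in magnitude, so they are $O(1)$, but many of them exceed $1$, making the configuration unhappy at a constant fraction of the edges. As the potential, I would track the first Fourier coefficient $\hat y_1 = \sum_v y(v)\,\omega^{-v}$ with $\omega = e^{2\pi i/n}$. A direct computation gives $|\hat x_1| = nL/4 = L^2/8$. On the other hand, applying the Abel-type identity $\hat \delta_1 = (\omega-1)\,\hat y_1$, where $\delta(v) = y(v+1)-y(v)$, together with the trivial bound $|\hat \delta_1| \le \sum |\delta(v)| \le n$ valid whenever $y$ is $1$-Lipschitz, yields $|\hat y_1| \le n/(2\sin(\pi/n)) = L^2/(8\pi) + O(1)$ for any locally optimal $y$. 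Hence $|\hat y_1|$ must drop by at least $L^2/8 - L^2/(8\pi) = \Omega(L^2)$.

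A single match-and-balance step using a matching $M$ modifies $\hat y_1$ by
\[
\Delta \hat y_1 \;=\; \sum_{(u,v)\in M}\tfrac{y(u)-y(v)}{2}\,(\omega^{-v}-\omega^{-u}).
\]
Since $|\omega^{-v}-\omega^{-u}| = 2\sin(\pi/n) = O(1/L)$ on adjacent edges and $|M| \le n/2 = O(L)$, the magnitude is $|\Delta \hat y_1| \le O(D)$, where $D$ bounds the maximum adjacent difference. If $D = O(1)$ throughout, this is $O(1)$ per round, and $\Omega(L^2)$ rounds are needed.

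The main obstacle is controlling $D$ along the entire trajectory: averaging on $(u,v)$ can enlarge the difference on $(u-1,u)$ or $(v,v+1)$ through two-hop combinations of the old gradients, so a strict invariant $D = O(1)$ need not hold pointwise in time. I would address this by combining the Fourier potential with the quadratic gradient potential $\Psi = \sum_e (y(u)-y(v))^2$, which is non-increasing on matched edges and whose change on neighbouring edges is itself controlled by the old differences, so that an amortised form of the per-round estimate $|\Delta \hat y_1| = O(1)$ can be recovered.
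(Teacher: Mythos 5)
Your overall route---a smooth sinusoidal instance on a short cycle plus a Fourier coefficient that must drop by $\Omega(L^2)$ while each round changes it by $O(1)$---is genuinely different from the paper's proof and could be made to work, and your computations of $|\hat x_1| = L^2/8$ and of the Lipschitz bound $|\hat y_1| \le n/(2\sin(\pi/n))$ are fine. But the crucial step, the per-round bound $|\Delta\hat y_1| = O(1)$, is exactly where the proposal has a genuine gap, and the patch you propose does not close it. You correctly observe that the maximum adjacent difference $D$ need not stay $O(1)$: indeed on loads $0, D, 2D$ a full balance of the right edge produces $0, \tfrac{3D}{2}, \tfrac{3D}{2}$, so $D$ can grow by a factor $3/2$ in one step. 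Your rescue is the quadratic gradient potential $\Psi = \sum_e (y(u)-y(v))^2$, but $\Psi$ is \emph{not} a monotone potential under match-and-balance: in the same example $\Psi$ increases from $2D^2$ to $\tfrac{9}{4}D^2$ (only the matched edge's own term decreases). The ``amortised form'' of the estimate is asserted, not proved, and since it is the heart of the lower bound, the argument as written does not go through.

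The fix is to replace the $\ell_2$ gradient by the $\ell_1$ gradient: what you actually need each round is $\sum_{(u,v)\in M}|y(u)-y(v)| = O(L)$, which follows because the total variation $\Phi = \sum_e |y(u)-y(v)|$ \emph{is} non-increasing under any match-and-balance step on a path or cycle (a matched edge moving $\delta_e \le |y(u)-y(v)|/2$ lowers its own term by $2\delta_e$ and can raise each of its at most two neighbouring, unmatched terms by at most $\delta_e$), and $\Phi = O(L)$ for your initial sinusoid. Then $|\Delta\hat y_1| \le \sin(\pi/n)\,\Phi = O(1)$ and the $\Omega(L^2)$ bound follows; you would still need a small adaptation (a single bump, or a cosine coefficient) to cover paths as well as cycles, as the theorem claims both. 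It is worth noting that this repaired argument is close in spirit to the paper's: there the instance is a monotone step, monotonicity is shown to be preserved (so the sum of edge differences stays at most $L$ and at most $L/2$ units of load cross edges per round), and a displacement/work count of $\Omega(L^3)$ forces $\Omega(L^2)$ rounds---your Fourier coefficient plays the role of the paper's work potential, and the total-variation bound plays the role of its monotonicity lemma.
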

\begin{restatable}{theorem}{thmnegcareful}\label{thm:neg-careful}
    Any careful algorithm takes $\Delta^{\Omega(L)}$ rounds in the worst case.
\end{restatable}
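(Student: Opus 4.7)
The plan is to exhibit an explicit family of instances in which any valid locally optimal output is forced to carry an exponentially large amount of flow across a single edge. Since a careful algorithm can change the flow on each edge by only $\poly(L)$ per round, such an instance immediately forces a $\Delta^{\Omega(L)}$ lower bound on the running time. Concretely, I would take two disjoint rooted complete $(\Delta-1)$-ary trees $T_L$ and $T_R$, each of depth $d=\lfloor L/8\rfloor$, and join their roots $r_L,r_R$ by a single ``bridge'' edge $b$; the resulting graph has maximum degree $\Delta$. The input is $x(v)=L$ for every $v\in V(T_L)$ and $x(v)=0$ for every $v\in V(T_R)$.

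The core step is a purely combinatorial lemma: in every feasible output $(y,f)$ for this instance, the bridge $b$ must carry $\Omega\bigl(L\cdot(\Delta-1)^d\bigr)=\Delta^{\Omega(L)}$ units of flow. To see this, note that $T_L$ has depth $d$, so iterating $|y(u)-y(v)|\le 1$ along any root-to-node path gives $|y(v)-y(r_L)|\le d$ for all $v\in V(T_L)$, and analogously for $T_R$. Writing $\bar y_L,\bar y_R$ for the two side-averages, this yields $|\bar y_L-y(r_L)|\le d$ and $|\bar y_R-y(r_R)|\le d$; combined with the bridge inequality $|y(r_L)-y(r_R)|\le 1$, it follows that $|\bar y_L-\bar y_R|\le 2d+1\le L/4+1$. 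Load conservation together with $|V(T_L)|=|V(T_R)|$ forces $\bar y_L+\bar y_R=L$, so $\bar y_R\ge 3L/8-1$. Since $T_R$ starts empty, every unit of load now residing in $T_R$ must have crossed $b$, whence $|f(r_L,r_R)|\ge \bar y_R\cdot|V(T_R)|=\Omega\bigl(L\cdot(\Delta-1)^d\bigr)$.

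The remainder is then immediate: by the careful hypothesis the algorithm changes $f(r_L,r_R)$ by at most $\poly(L)$ per round, so after $T$ rounds $|f(r_L,r_R)|\le T\cdot\poly(L)$, and combining with the structural lower bound gives
\[
T \;\ge\; \frac{\Omega\bigl(L\cdot(\Delta-1)^{L/8}\bigr)}{\poly(L)} \;=\; \Delta^{\Omega(L)},
\]
where the last equality uses $\log(\Delta-1)/\log\Delta=\Omega(1)$ for $\Delta\ge 3$.

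The hard part will be the structural lemma, i.e., ruling out ``stratified'' equilibria that might keep the two sides nearly as imbalanced as the input and thereby avoid a large flow across $b$. This is exactly where the choice of $d$ as a sufficiently small constant fraction of $L$ matters: the spread of $O(d)$ permitted inside each tree by local optimality is too small to absorb the $\Theta(L)$ gap that conservation forces between the two side-averages, so a $\Theta(L)$ fraction of the load at \emph{every} node of $T_L$ must in fact be transported across the single bridge edge rather than stored locally.
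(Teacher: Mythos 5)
Your proposal is correct and follows essentially the same route as the paper: the same family of hard instances (two complete $(\Delta-1)$-ary trees of depth $\Theta(L)$ joined by a bridge, one side fully loaded and the other empty), the same key lemma that any feasible output forces $\Delta^{\Omega(L)}$ units of flow across the bridge, and the same final step that a careful algorithm changes the bridge flow by only $\poly(L)$ per round. The only (minor) difference is how the key lemma is established: the paper argues by a case analysis on the output load of the empty-side root, while you use side-averages plus load conservation, which is an equally valid and in fact slightly sharper variant of the same argument.
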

\begin{restatable}{theorem}{thmnegoblivious}\label{thm:neg-oblivious}
    There are no oblivious algorithms for infinite $d$-regular trees with $d \ge 3$.
\end{restatable}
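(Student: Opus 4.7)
The plan is to force any oblivious algorithm into an infeasible output by comparing its behaviour on just two carefully chosen inputs. First I would encode the algorithm by a single non-negative function $g(x,k)$, where $g(x(u),\dist(u,v))$ is the mass of load transported from $u$ to $v$. In the infinite $d$-regular tree the number of nodes at distance $k$ from any basepoint is the same, namely $n_0 = 1$ and $n_k = d(d-1)^{k-1}$ for $k \ge 1$, so conservation at every source reads $x(u) = \sum_{k \ge 0} n_k\, g(x(u),k)$. Applied to a zero-load source, this combined with $g \ge 0$ forces $g(0,k) = 0$ for every $k$, so nodes with no load ship nothing.

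Next I would consider the \emph{point-mass} input in which one node $v_0$ has load $L$ and every other node has load $0$. Writing $h_k := g(L,k)$, the output is $y(v) = h_{\dist(v_0,v)}$, and conservation becomes $h_0 + d \sum_{k \ge 1}(d-1)^{k-1} h_k = L$. I will only need the rearranged form
\[
S \,:=\, \sum_{k \ge 1}(d-1)^{k-1} h_k \,=\, \frac{L - h_0}{d}.
\]
I would then apply the same $g$ to a \emph{half-space} input: fix an edge $(v_0,v_1)$, let $T_0$ be the component of $v_0$ after removing that edge, and set $x \equiv L$ on $T_0$ and $x \equiv 0$ elsewhere. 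Since zero-load nodes ship nothing, $y(v) = \sum_{u \in T_0} h_{\dist(u,v)}$. Inside $T_0$, which is a $(d{-}1)$-ary subtree rooted at $v_0$, there are $(d-1)^j$ nodes at depth $j$ from $v_0$, and the same nodes sit at depth $j+1$ from $v_1$; hence $y(v_0) = h_0 + (d-1)S$ and $y(v_1) = S$. Substituting the point-mass identity collapses the edge difference to
\[
y(v_0) - y(v_1) \,=\, \frac{2h_0 + (d-2)L}{d} \,\ge\, \frac{(d-2)L}{d}.
\]
For $d \ge 3$ and $L > d/(d-2)$ this exceeds $1$, so edge $(v_0,v_1)$ violates the happy-edge condition~\eqref{eq:happy}, contradicting the correctness of the algorithm.

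The main obstacle I anticipate is the index bookkeeping: one must confirm that $T_0$ rooted at $v_0$ has $(d-1)^j$ nodes at depth $j$ (rather than $d(d-1)^{j-1}$, as in the full tree around an unrestricted basepoint), and check that the sum $S$ reappears on both sides of the central edge so that the one-variable point-mass identity eliminates all unknown $h_k$ for $k \ge 1$. Conceptually the argument rests on two clean facts: the regular-tree symmetry makes a single conservation equation govern the point-mass output, and the exponential growth $(d-1)^j$ then prevents any radially defined shipping strategy from being simultaneously conservative on each side of a half-space and smooth across its boundary.
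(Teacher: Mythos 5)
Your proof is correct and takes essentially the same route as the paper: the same half-space instance across an edge $\{v_0,v_1\}$, with your $S$ being exactly the paper's $\alpha=\sum_{r\ge 0}(d-1)^r g(r+1)$ and your final bound $y(v_0)-y(v_1)=L-2S\ge\frac{d-2}{d}L>1$ identical to the paper's conclusion. The only presentational difference is that you justify per-source conservation, $g(0,\cdot)\equiv 0$, and $h_0\ge 0$ via the point-mass input, whereas the paper asserts the corresponding facts ($\beta=L-d\alpha\ge 0$, empty nodes contribute nothing) directly from nonnegativity of outputs.
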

We then present the main contributions---local algorithms for load balancing. First, we show that we can circumvent the barrier of Theorem~\ref{thm:neg-mab}:
\begin{restatable}{theorem}{thmpospaths}\label{thm:pos-paths}
    Discrete load balancing can be solved in time $O(L)$ in paths and cycles, with a deterministic local algorithm.
\end{restatable}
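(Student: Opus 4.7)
The plan is to combine a fractional smoothing phase with a local discrete rounding, executing both within $O(L)$ rounds.

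\emph{Fractional phase.} I would apply the moving-average algorithm of Section~\ref{ssec:intro-avg} with window radius $T = \Theta(L)$. In $O(L)$ rounds this yields a fractional output $y_{\mathrm{frac}}(v) \in [0,L]$ with $|y_{\mathrm{frac}}(u)-y_{\mathrm{frac}}(v)| < 1/2$ on every edge. Writing the realising flow $g_{\mathrm{frac}}(e)$ as a prefix sum of $x-y_{\mathrm{frac}}$, the contributions from inputs far from $e$ cancel exactly, so $g_{\mathrm{frac}}(e)$ depends only on inputs within distance $O(L)$ of $e$ and has magnitude $O(L^2)$; in particular, $g_{\mathrm{frac}}$ is locally computable in $O(L)$ rounds.

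\emph{Rounding phase.} Given $(y_{\mathrm{frac}}, g_{\mathrm{frac}})$, I would choose integer outputs $y(v) \in \{\lfloor y_{\mathrm{frac}}(v)\rfloor, \lceil y_{\mathrm{frac}}(v)\rceil\}$ by a local rule. Since adjacent fractional values already differ by less than $1/2$, any such rounding automatically satisfies $|y(u)-y(v)| \le 1$. The realising integer flow $f$ differs from $g_{\mathrm{frac}}$ by a correction $\delta(e) = f(e)-g_{\mathrm{frac}}(e)$ which is forced by the discrete conservation law $\delta(v,v+1)-\delta(v-1,v) = y_{\mathrm{frac}}(v)-y(v)$; solving this telescopingly, the correction has bounded support provided the partial sums of the rounding errors $y_{\mathrm{frac}}(v)-y(v)$ stay bounded.

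\emph{Main obstacle.} The hard step is this \emph{dependent rounding}: the node-wise choices are coupled through flow conservation, and an independent rounding at each node lets the partial error sums drift without bound, producing arbitrarily large flows. I would resolve this with a canonical window-based rule in which each node rounds using a prefix sum of $y_{\mathrm{frac}}$ computed inside its own $\Theta(L)$-radius window, anchored at a canonical node of that window (for example, the leftmost node, or the node of locally-minimal identifier); overlapping windows then agree on the rounding of each node up to a bounded discrepancy, which keeps $\delta$ bounded on every edge. On cycles the single extra global degree of freedom in $f$ is absorbed automatically because the total load is already integer and conserved, while on paths the support of the correction stays within the $O(L)$-neighbourhood of the support of the input, so the entire algorithm is local and runs in $O(L)$ rounds.
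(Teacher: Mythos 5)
Your plan correctly isolates the hard step -- the dependent rounding -- but it does not actually resolve it, and one of your intermediate claims is false. First, it is not true that \emph{any} choice $y(v)\in\{\lfloor y_{\mathrm{frac}}(v)\rfloor,\lceil y_{\mathrm{frac}}(v)\rceil\}$ satisfies $|y(u)-y(v)|\le 1$ when adjacent fractional loads differ by less than $1/2$: for $y_{\mathrm{frac}}(u)=2.9$ and $y_{\mathrm{frac}}(v)=3.3$, rounding $u$ down and $v$ up gives a difference of $2$. So even a conservation-respecting rounding still needs either a consistency argument for the rounding directions or a subsequent local repair phase, neither of which you supply. Second, and more seriously, feasibility requires a flow realising $y$, hence the load must be conserved \emph{exactly} (on a cycle, $\sum_v y(v)=\sum_v x(v)$ is necessary; on a path the error prefix sums must telescope to zero, not merely stay ``bounded''). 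Your window-based rule anchors each node's prefix sum at the leftmost or minimum-ID node of \emph{that node's own} window; different nodes use different anchors, so the node-wise decisions are not draws from one consistent cascade, and nothing prevents the rounding from creating or destroying load, in which case no flow exists at all -- ``absorbed automatically'' on the cycle is an assertion, not an argument. There is also an orientation gap: prefix sums and ``leftmost'' presuppose a globally consistent left/right direction, which is not locally computable on undirected paths and cycles; this is precisely why the paper devotes a separate construction (the virtual doubling of Figure~\ref{fig:virtpath}) to symmetry breaking. A genuine repair along your lines does seem possible -- since the inputs are integers, the fractional part of the \emph{global} prefix sum of the moving average depends only on inputs within distance $O(L)$, so the cascade rounding $y(v)=\lfloor S(v)\rfloor-\lfloor S(v-1)\rfloor$ is locally computable, and the remaining sparse difference-$2$ violations could then be fixed by a local matching step with a slightly larger window -- but none of that analysis appears in your sketch, and it is exactly the content that would constitute the proof.

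For comparison, the paper's proof takes a different route altogether: it never passes through a fractional solution. It views the loads as tokens in a node-by-height grid of slots, repeatedly ``pushes'' tokens along $\ell$-diagonals for $\ell\in\llbracket-3,3\rrbracket$, and the key Lemma~\ref{lem:push} shows that each push preserves all previously achieved stabilities, yielding a configuration whose load curve has slope at most $1/3$ in $O(L)$ rounds on consistently oriented paths; finite paths are handled by padding with dummy nodes, orientation by running the algorithm on a doubled virtual path, and the final slack (slope $1/3$ rather than $1$) is what makes the last $O(1)$-time local repair after recombination work. That last point is instructive for your approach too: the paper deliberately over-smooths to buy room for the local fix-up, which is the kind of quantitative slack your rounding phase would also need.
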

\begin{corollary}
    The time complexity of both fractional and discrete load balancing in paths and cycles is $\Theta(L)$.
\end{corollary}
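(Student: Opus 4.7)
The plan is to assemble the corollary from results already in hand, so there is no new technical work to do. The lower bound direction is immediate: Theorem~\ref{thm:neg-linear} asserts that load balancing on paths and cycles requires $\Omega(L)$ rounds, and its statement is agnostic to whether we are in the discrete or fractional setting, so both flavours inherit the same $\Omega(L)$ bound.

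For the upper bound in the discrete case, I would invoke Theorem~\ref{thm:pos-paths} directly, which provides a deterministic local algorithm running in $O(L)$ rounds on paths and cycles. For the fractional case, the cleanest route is to observe that any discrete solution is automatically a fractional solution, since $\{0,1,\dotsc,L\} \subseteq [0,L]$ and the constraints \eqref{eq:preserve} and \eqref{eq:happy} are identical in both formulations; hence the algorithm of Theorem~\ref{thm:pos-paths} also certifies an $O(L)$-round upper bound for the fractional problem. (In the special case of cycles, the moving-average construction from Section~\ref{ssec:intro-avg} already gives an $O(L)$ fractional algorithm, which provides an alternative witness.)

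Combining the matching $\Omega(L)$ lower bound with the $O(L)$ upper bounds yields $\Theta(L)$ for both the discrete and fractional versions on paths and cycles. There is no real obstacle: the only thing worth checking is the trivial inclusion of discrete feasible solutions into fractional ones, so that the positive result of Theorem~\ref{thm:pos-paths} can be reused without modification across both settings.
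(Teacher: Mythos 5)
Your proposal is correct and matches the argument the paper intends (the corollary is stated without a separate proof, precisely because it follows by combining the $\Omega(L)$ lower bound of Theorem~\ref{thm:neg-linear}, which applies to both flavours, with the $O(L)$ discrete algorithm of Theorem~\ref{thm:pos-paths}, whose output is trivially also a feasible fractional solution). Your observation that discrete feasibility implies fractional feasibility is exactly the right glue, so nothing is missing.
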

Next we show that we can also circumvent the barriers of Theorem \ref{thm:neg-careful} and \ref{thm:neg-oblivious} for fractional load balancing---naturally, we have to design an algorithm that is neither oblivious nor careful:
\begin{restatable}{theorem}{thmposfractional}\label{thm:pos-fractional}
    Fractional load balancing can be solved in time $\poly(L,\Delta)$ in bounded-degree graphs with a deterministic local algorithm.
\end{restatable}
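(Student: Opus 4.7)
My plan is to design an algorithm around the structural observation from Section~\ref{ssec:intro-def} that a local solution always exists in which every token moves at most $L$ hops. This suggests the problem has inherent locality of radius $O(L)$, which an algorithm running in $\poly(L,\Delta)$ rounds can in principle exploit. Since Theorems~\ref{thm:neg-careful} and~\ref{thm:neg-oblivious} rule out careful and oblivious algorithms respectively, the algorithm must be simultaneously aggressive (moving up to $\Delta^{\Omega(L)}$ flow across an edge in a single round) and adaptive (depending on the actual load pattern, not merely on graph distances).

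The approach I would take is to cast fractional balancing as a convex program and solve a localized version of it. A natural choice is to minimize a strictly convex ``energy'' such as $\sum_e f(e)^2$ (or $\sum_v y(v)^2$ augmented with a flow penalty) subject to $y(v)=x(v)+\sum_{(u,v)\in E} f(u,v)$ and the happiness constraint $|y(u)-y(v)|\le 1$. Strict convexity gives a unique optimum, which is automatically a locally optimal balancing. The algorithm then has each node gather its $R$-neighborhood for $R=\poly(L,\Delta)$, solve the restricted LP on that ball (with zero-flow boundary conditions), and emit the flows on its incident edges. The key claim to establish is that the optimal flow on an edge $(u,v)$ is essentially determined by the $x$-values within distance $R$, so that the independent computations done by $u$ and by $v$ agree on the shared edge value.

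The main obstacle is making this robustness claim quantitative. I would attempt it via an exchange argument: take two inputs that agree on a ball of radius $R$, compare their LP optima, and show that the flow discrepancy decays rapidly with distance from the disagreement region, using the radius-$L$ existence result to bootstrap the decay. If the dependence on the boundary can be made exact for $R=\poly(L,\Delta)$, the algorithm succeeds directly; otherwise, the output is a near-feasible solution with $|y(u)-y(v)|\le 1+\varepsilon$ for small $\varepsilon$, and one would append a short correction phase that absorbs the residual imbalances using a bounded number of additional local moves. Producing the sensitivity analysis of the LP with a polynomial quantitative bound, and reconciling it with the adversarial choices that defeat oblivious and careful algorithms, is the most delicate step.
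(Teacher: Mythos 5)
Your proposal is a research plan rather than a proof, and the plan's load-bearing step is exactly the part you leave open. Everything hinges on the claim that the optimum of your convex program (minimize $\sum_e f(e)^2$ subject to conservation and $|y(u)-y(v)|\le 1$) is determined, up to negligible error, by the input within radius $R=\poly(L,\Delta)$ of an edge. You do not prove this, and it is not a routine sensitivity analysis: the happiness constraints are inequality constraints whose active set can form long chains (staircase-shaped load profiles of slope exactly $1$), so a perturbation can in principle propagate through a sequence of tight constraints far beyond distance $L$. The radius-$L$ existence argument from Section~\ref{ssec:intro-central} only says \emph{some} feasible solution moves each token at most $L$ hops; it says nothing about the $\ell_2$-optimal flow, which need not coincide with any locally determined solution, so it cannot ``bootstrap'' the decay without substantial extra work. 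The consistency issue you flag (the balls around $u$ and around $v$ must output the same value on the shared edge) is precisely the locality of the problem restated, so as written the argument is circular; and the fallback of accepting $|y(u)-y(v)|\le 1+\varepsilon$ plus a ``short correction phase'' is also unjustified, since you give no mechanism showing the residual imbalance can be absorbed locally.

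For comparison, the paper takes a completely different, constructive route: it reuses the slot/level machinery of the discrete algorithm (Section~\ref{ssec:pos-discrete}), processing levels $h=2L,\dotsc,1$, freezing load whose downward cone is full, and at each level computing an $\varepsilon$-maximal \emph{fractional} matching (Khuller et al.) between unfrozen load at level $h$ and free space in the cones below, with $\varepsilon\le 1/(4L)$; rounding introduces a bookkept surplus/deficit of at most $\varepsilon$ per slot, undone at the end, and the total time is $O(L^3\log\Delta)$. That approach gets its locality ``for free'' from the combinatorial invariant that load at level $h$ either moves strictly down or is provably frozen, rather than from an analytic decay estimate. If you want to pursue your LP route, the concrete missing ingredient is a quantitative correlation-decay theorem for this specific quadratic program with a $\poly(L,\Delta)$ radius; without it the proposal does not establish Theorem~\ref{thm:pos-fractional}.
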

Finally, we show that discrete load balancing can be solved locally, i.e., in time that is independent of~$n$:
\begin{restatable}{theorem}{thmposdiscrete}\label{thm:pos-discrete}
    Discrete load balancing can be solved in time $T(L,\Delta)$, for some function $T$, in bounded-degree graphs with a deterministic local algorithm.
\end{restatable}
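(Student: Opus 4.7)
My plan is to reduce the discrete problem to the fractional one of Theorem~\ref{thm:pos-fractional} and then round locally. I would first invoke that theorem to compute, in $\poly(L,\Delta)$ rounds, a fractional solution $y_f$ together with a fractional flow $f_f$ realising it, so that $|y_f(u)-y_f(v)|\le 1$ on every edge.

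Set $z(v):=\lfloor y_f(v)\rfloor$ and $r(v):=y_f(v)-z(v)\in[0,1)$. Since $|y_f(u)-y_f(v)|\le 1$ forces $|z(u)-z(v)|\le 1$, the integer function $z$ already satisfies the happiness constraint~\eqref{eq:happy}; what it fails to do is preserve total load, because within each connected component $C$ the fractional residue $N_C:=\sum_{v\in C} r(v)$ is a nonnegative integer that is typically positive. A discrete solution therefore arises by choosing a set $U\subseteq V$ with $|U\cap C|=N_C$ in every component, setting $y_d(v)=z(v)+1$ for $v\in U$ and $y_d(v)=z(v)$ otherwise, and imposing the \emph{monotonicity} rule that whenever $z(u)>z(v)$ on an edge $(u,v)$, $u\in U$ implies $v\in U$. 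Monotonicity combined with $|z(u)-z(v)|\le 1$ is exactly what preserves~\eqref{eq:happy} after rounding, and the count condition is what allows~\eqref{eq:preserve} to be realised by a suitable flow.

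To produce such a $U$ locally, I would treat the selection of rounded-up nodes as a secondary bounded-range balancing task on the residuals: redistribute the $r(v)$-mass between neighbours by small local moves so that each node ends with a value in $\{0,1\}$, while respecting the level structure of $z$. This can be tackled in $\poly(L,\Delta)$ additional rounds by another application of Theorem~\ref{thm:pos-fractional} on a suitably scaled residual instance, followed by a local cleanup that exploits the token-pile argument from the introduction: there is always a discrete solution in which each unit of load moves at most distance $L$, so the discrete value $y_d(v)$ and its realising flow are determined by the radius-$O(L)$ view around $v$. Once the preprocessing has made the instance locally consistent, each node brute-forces a valid rounding from a bounded-radius neighbourhood, and the accompanying flow is assembled by tracking the local moves used during the rounding.

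The main obstacle is the per-component count condition: $N_C$ is a global invariant of a potentially large component, yet the algorithm must commit to $|U\cap C|=N_C$ from bounded information. My plan is to handle this by combining two ingredients: the fractional preprocessing ensures the residue is already spread evenly across the component, so any large enough ball carries a representative share of $N_C$; and the local-solution guarantee certifies that some valid rounding is visible within any sufficiently large ball, so a brute-force search of a bounded neighbourhood suffices. Together these turn the selection of $U$ into a locally checkable, locally solvable task whose radius depends only on $L$ and $\Delta$, giving the desired function $T(L,\Delta)$.
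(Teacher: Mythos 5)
There is a genuine gap, and it sits exactly where the difficulty of the theorem lies. Your reduction turns the problem into a rounding task: given a fractional solution $y_f$ with $|y_f(u)-y_f(v)|\le 1$, pick a set $U$ with $|U\cap C|=N_C$ in every component so that the rounded values still satisfy~\eqref{eq:happy} and the totals are preserved. But the two arguments you offer for solving this locally do not hold up. First, the fractional preprocessing does \emph{not} ``spread the residue evenly'': Theorem~\ref{thm:pos-fractional} only guarantees adjacent loads differ by at most $1$, so for instance $y_f$ may be identically $1/2$ on a huge component, in which case $N_C$ is a genuinely global quantity (exactly half the nodes must be rounded up) and no radius-$O(L)$ ball carries information that pins down which ones. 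Second, and more fundamentally, the step ``there is always a discrete solution in which each token moves at most $L$, so $y_d(v)$ and its flow are determined by the radius-$O(L)$ view around $v$'' is a non sequitur: the existence of a \emph{local solution} does not imply that the solution value is a function of the local view, because many incompatible local solutions exist and independent brute-force choices at different nodes need not glue into one consistent output (and need not hit the exact count $N_C$). If this inference were valid, the theorem would already follow from the token-pile observation in Section~1.3 with no further work; the whole point of the paper is that closing this gap requires an actual coordination mechanism. (As a smaller issue, you also do not argue that a monotone $U$ with the prescribed per-component cardinality exists, nor how the realising flow is produced.)

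The paper's proof takes a different, direct route that supplies precisely the missing coordination: tokens sit in slots $(v,i)$, a token is frozen once its downward cone is full, and then for each level $h=L,\dotsc,1$ one builds a virtual bipartite graph between the unfrozen level-$h$ tokens and the free slots in their cones, computes a \emph{maximal matching} there (using the distributed bipartite maximal matching algorithm as a subroutine), moves matched tokens down and freezes unmatched ones. Maximality of the matching is what replaces your per-component counting argument: it certifies that every token that stays put has a full cone, hence is stable, while conservation of load is automatic because tokens are only moved, never created or destroyed. This yields a running time of $O(L^3\Delta^L)$. If you want to salvage a reduction-plus-rounding scheme, you would need to show that the integral rounding of a fractional solution can itself be computed by a local algorithm, which is not easier than the original problem and is not addressed by your proposal.
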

Whether there is an \emph{efficient} algorithm for discrete load balancing in the general case remains an open question.

\section{Related work}

There is a vast body of literature related to problems that are superficially similar to locally optimal load balancing. However, in many cases the primary goal is something else---for example, achieving a near-optimal global solution---and the algorithms just happen to also find a locally optimal solution.

Most of the previous solutions are inefficient. In particular, we are not aware of any solution that comes close to $O(L)$ for discrete load balancing on paths, or close to $\poly(L,\Delta)$ for fractional load balancing in general graphs. In prior work, the inefficiency typically stems from at least one of the following factors:
\begin{enumerate}
    \item \emph{Inherently global problems}: A lot of prior work focuses on problems that are inherently global---for example, the task is to find a solution such that the difference between the minimum load and the maximum load is at most $1$. It is easy to see that any algorithm for solving such problems takes $\Omega(n)$ rounds in the worst case.
    \item \emph{Natural but inefficient algorithms}: Many papers study various natural processes for doing load balancing. Many of these are of match-and-balance type, and virtually all of these are careful. Typically, the negative results of Theorems \ref{thm:neg-mab} and \ref{thm:neg-careful} apply.
\end{enumerate}
In contrast, we study a problem that can be solved efficiently, and our algorithms demonstrate that it is indeed possible to break the barriers of Theorems \ref{thm:neg-mab} and \ref{thm:neg-careful}. In what follows, we will discuss related work in more detail.

\paragraph{Reducing a global potential with local rules.}

There is a lot of literature on load balancing when the goal is to reduce a global potential function by iterating a local balancing rule. Examples of such potential functions are the difference between the maximum and the minimum load (\emph{discrepancy}), the maximum load (\emph{makespan}), and the quadratic difference to the average load.

Various models are considered: two classic models are the \emph{diffusion model}, where vertices distribute their load to all their neighbours, and the \emph{matching model}, where the load is exchanged only along the edges of a matching---for example a random matching or an edge colouring.

In the \emph{continuous case}, where the loads are assumed to be infinitely divisible, the speed of convergence was analysed for simple schemes both in the diffusion model \cite{sinclair89approximate, rabani98local} and the matching model~\cite{ghosh96random, boyd06randomized}. In both the speed of convergence is essentially captured by the spectral properties of the graph in question.

In the context of indivisible loads, known as the \emph{discrete case}, similar problems were first studied for networks designed to balance the load quickly~\cite{peleg89token}. Different schemes for reducing the discrepancy in the discrete case were analysed, the question of whether the speed of convergence in the continuous case could be matched, remained open~\cite{aiello93approximate-load-balancing, ghosh96random, muthukrishnan98first-order, ghosh99tight-analyses}. Recently Sauerwald and Sun~\cite{sauerwald12tight} were able to prove convergence as fast as in the continous case, up to constant factors. Reducing discrepancy is a global problem and can take linear time in the worst case.

\paragraph{Semi-matching problem.}

In the \emph{semi-matching problem} the nodes of a graph are divided into clients and servers~\cite{harvey06semi-matchings}. Each client has to be assigned to an adjacent server. The goal is to optimise the total waiting time of the clients.

Czygrinow et al.~\cite{czygrinow12semimatching} presented a distributed algorithm for finding a locally optimal semi-matching in time $\poly(\Delta)$; this also implies a factor-$2$ approximation of globally optimal semi-matchings.

The semi-matching problem is very similar to the locally optimal load balancing problem, especially when limited to the case of degree 2 clients, with the tokens being more ``localised''. Indeed, our linear lower bound can be adapted to prove an $\Omega(\Delta)$ lower bound for locally optimal semi-matchings.

\paragraph{Balls into bins.}

In the \emph{$d$-choice process} each of $n$ balls goes in the least loaded of $d$ random bins. Dependency of the maximum load on the parameter $d$ is well known~\cite{azar99balanced, karp92efficient, vocking03how-asymmetry}. The choice of the bins can be modelled by a graph~\cite{kenthapadi06balanced}; in one variant the bins are connected by edges and each ball does a local search until it finds a local minimum~\cite{bogdan13balls-local-search, bringmann14balls}. This process produces a locally optimal load balancing.

\paragraph{Sandpile models and chip-firing games.}

Our stability condition is similar to what is used in \emph{sandpile models}~\cite{bak87self-organized,dhar06self-organized,kadanoff89scaling} and \emph{chip-firing games}~\cite{anderson89disks}. However, in these problems the goal is usually to describe final configurations for fixed, very simple algorithms that simulate a natural phenomenon.

\paragraph{Filtering.}

\emph{Sliding window algorithms} for computing the \emph{running average} or for \emph{image filtering} are natural local algorithms. Averaging type algorithms, however, cannot guarantee an integral solution to load balancing problems. \emph{Median filtering} does guarantee integral solutions for integral inputs; however, it does not preserve the total load.

\paragraph{Games and equilibriums.}

The locally optimal load balancing problem can be seen as a problem of finding an equilibrium state, where no single load token can gain advantage by moving. We show that such an equilibrium can be found locally, that is, the decisions made in one part of the graph do not propagate too far. This is in contrast with problems such as finding \emph{stable matchings}, where there is a local algorithm only for finding almost-stable matchings~\cite{floreen10almost-stable}.

\paragraph{Matchings.}

Locally optimal load balancing is closely relate to \emph{bipartite maximal matching}: if the initial loads are $x(v) \in \{0,2\}$, then it is easy to see that a solution can be found using a bipartite maximal matching algorithm. This is a problem that can be solved in time $O(\Delta)$~\cite{hanckowiak98distributed}. Showing a matching lower bounds is a major open question, and we do not expect that one can prove tight lower bounds for locally optimal load balancing as a function of $\Delta$ before we resolve the distributed time complexity of bipartite maximal matching.

In our algorithms for discrete load balancing, we will use the bipartite maximal matching~\cite{hanckowiak98distributed} algorithm as a subroutine. For fractional load balancing, we use the \emph{almost-maximal fractional matching algorithm} due to Khuller et al.~\cite{khuller94primal-dual} as a subroutine.

\section{Negative results}

We will now prove the negative results of Theorems \ref{thm:neg-linear}--\ref{thm:neg-oblivious}. For simplicity, we prove the statements for deterministic distributed algorithms; it is fairly straightforward to extend the results to randomised algorithms (e.g., consider the expected values of the outputs).

Recall that in Section~\ref{ssec:intro-def} we defined the problem so that the output is bounded by $L$. However, we will not exploit this restriction in any of the lower-bound proofs. The negative results hold verbatim for a relaxed version of the problem in which the outputs can be any nonnegative real numbers. We only assume that the inputs are bounded by~$L$.

\subsection{Load balancing on paths and cycles}\label{ssec:neg-linear}

We start with the unconditional lower bound that holds for any algorithm, for both fractional and discrete load balancing, and in the simplest possible case of paths or cycles.

\thmneglinear*

\begin{proof}
We will give the proof for the case of paths; the case of cycles is very similar. Consider a path $P$ with $n$ nodes, labelled with the numbers $1,2,\dotsc,n$ from left to right, for a sufficiently large $n$. Let $A$ be a load-balancing algorithm. For an input $x\colon v \to L$, we write $A(x)$ for the output of $A$ on input $x$. Let $h = \lfloor L/2\rfloor-1$.

Consider the following constant inputs: $x_0\colon v \to 0$ and $x_L\colon v \to L$. Let $y_0 = A(x_0)$ and $y_L = A(x_L)$. Clearly $y_0(v) = 0$ for all $v$ and $y_L(v) \ge L$ for at least one $v$. Hence we can find two nodes, $\ell$ and $r$, such that
\[
    y_0(\ell) = 0, \quad
    y_L(r) \ge L, \quad
    |r-\ell| = L-1.
\]
See Figure~\ref{fig:neg-linear} for an illustration.

\begin{figure}[ht]
    \centering
    \includegraphics[scale=1.2,page=8]{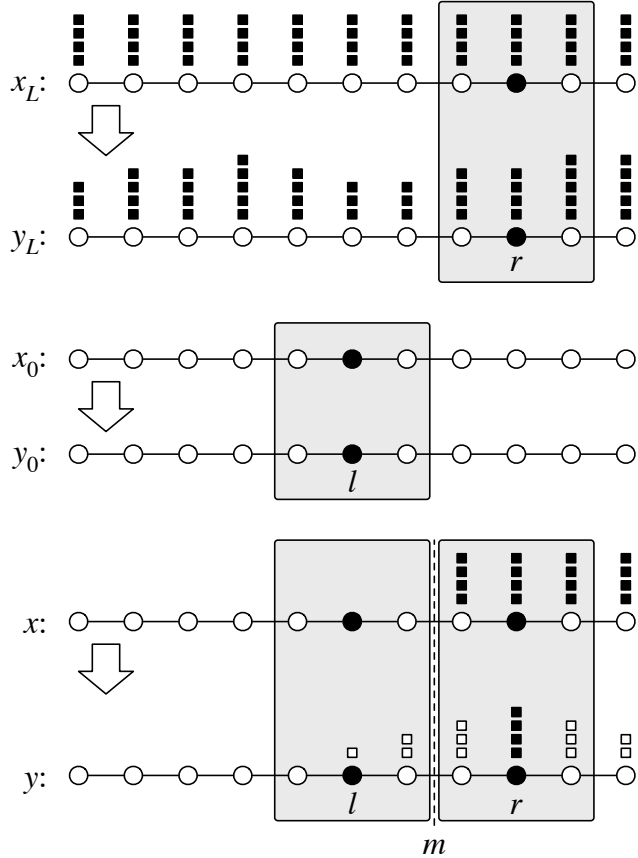}
    \caption{The proof of Theorem~\ref{thm:neg-linear} in Section~\ref{ssec:neg-linear}. In this example, $L = 4$ and $h = 1$. We can find a node $r$ with output at least $L$ in $y_L$, and a node $\ell$ with output $0$ in $y_0$ so that the distance between $\ell$ and $r$ is $L-1$. Then we construct instance $x$ that looks like $x_L$ in the $h$-neighbourhood of $r$ and it looks like $x_0$ in the $h$-neighbourhood of $\ell$. If node $r$ does not change its output between $y_L$ and $y$, then node $\ell$ has to change its output between $y_0$ and $y$. Hence the running time is at least $h+1$.}\label{fig:neg-linear}
\end{figure}

W.l.o.g., assume that $\ell < r$. Let $m = (r+\ell)/2$ be the midpoint between $\ell$ and $r$. Now define an input $x$ such that $x(i) = 0$ for $i \le m$ and $x(i) = L$ otherwise. Note that the radius-$h$ neighbourhoods of $\ell$ are identical in $x_0$ and $x$. Similarly, the radius-$h$ neighbourhoods of $r$ are identical in $x_L$ and $x$.

Let $y = A(x)$. If $y(\ell) = y_0(\ell)$ and $y(r) = y_L(r)$, we have a contradiction: the distance between $\ell$ and $r$ is smaller than their load difference, and hence there has to be an unhappy edge between them. Therefore $y(\ell) \ne y_0(\ell)$ or $y(r) \ne y_L(r)$. In both cases, there is a node $v$ that changed its output between two instances, even though the inputs were identical up to distance~$h$. Hence the running time of $A$ has to be at least $h+1 = \Theta(L)$.
\end{proof}

\subsection{Match-and-balance algorithms}\label{ssec:neg-mab}

Recall that in each round, a match-and-balance algorithm finds some matching $M$, and then for each edge $(u,v) \in M$ with $y(u) > y(v)$, the algorithm increases the flow $f(u,v)$ by at most ${(y(u) - y(v))}/2$. Note that $M$ does not need to be a maximal matching, a maximum matching, or a random matching---the following lower bound holds regardless of how clever the algorithm tries to be in its selection of the matching~$M$, and even if it gets the matchings in zero time from an oracle.

\thmnegmab*

The basic idea of the proof is simple. Let $A$ be a match-and-balance algorithm.
\begin{enumerate}[noitemsep]
    \item We construct an instance in which $A$ has to move $\Omega(L^3)$ units of load in total.
    \item We prove that $A$ can move only $O(L)$ units of load per round.
\end{enumerate}
Hence we have a lower bound of $\Omega(L^2)$ for the running time of $A$.

We will again study the case of paths; the case of cycles is very similar. Let $P$ be a path with $2n+1$ nodes, labelled with $-n,-n+1,\dotsc,n$ from left to right. We say that a load vector is \emph{monotone} if $y(i) \ge y(j)$ for all $i \le j$; see Figure~\ref{fig:mab-monotone}. The key feature of match-and-balance algorithms is that a monotone load vector remains monotone after each step.

\begin{restatable}{lemma}{lemmabmonotone}\label{lem:mab-monotone}
    Match-and-balance algorithms maintain a monotone load configuration on $P$.
\end{restatable}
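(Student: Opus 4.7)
The plan is to prove the invariant by induction on the number of rounds. Since monotonicity of the full load vector is equivalent to monotonicity on every adjacent pair $(i,i{+}1)$ (this being the transitive closure), it suffices to fix an arbitrary round, assume $y(i) \ge y(i{+}1)$ for every $i$, and show that the updated load $y'$ still satisfies $y'(i) \ge y'(i{+}1)$. Let $M$ be the matching chosen in this round, and for each matched edge $(u,v) \in M$ with $y(u) \ge y(v)$ let $\alpha_{u,v} \in [0, (y(u)-y(v))/2]$ denote the amount of load the algorithm moves from $u$ to $v$.

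The critical structural fact is that $M$ is a matching, so at most one of the three edges $(i{-}1,i)$, $(i,i{+}1)$, $(i{+}1,i{+}2)$ incident to node $i$ or $i{+}1$ can be in $M$ simultaneously with $(i,i{+}1)$, and in fact if $(i,i{+}1) \in M$ then neither of the other two is. I would then split into cases according to the intersection of $M$ with these three edges.

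If $(i,i{+}1) \in M$, then $y'(i) = y(i)-\alpha$ and $y'(i{+}1) = y(i{+}1)+\alpha$ for some $\alpha \le (y(i)-y(i{+}1))/2$, so $y'(i)-y'(i{+}1) = y(i)-y(i{+}1)-2\alpha \ge 0$. Otherwise $(i,i{+}1) \notin M$, and each of the endpoints can independently be matched with its outside neighbour or not. If $i$ is matched with $i{-}1$, then, using monotonicity of $y$, load flows from $i{-}1$ to $i$ and hence $y'(i) \ge y(i)$; if $i{+}1$ is matched with $i{+}2$, then load flows from $i{+}1$ to $i{+}2$ and hence $y'(i{+}1) \le y(i{+}1)$. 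In all four subcases one concludes $y'(i) \ge y(i) \ge y(i{+}1) \ge y'(i{+}1)$ (reading off whichever of the two bounds applies, with equality when the relevant edge is absent from $M$).

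There is essentially no obstacle here; the only subtle point is being careful that the direction of flow on the boundary edges $(i{-}1,i)$ and $(i{+}1,i{+}2)$ is forced by the monotonicity hypothesis on $y$, so the algorithm cannot, for instance, push load from $i$ outward to $i{-}1$. This is where the assumption that match-and-balance algorithms only move load from the heavier endpoint to the lighter endpoint is used in a nontrivial way.
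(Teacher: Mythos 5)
Your proof is correct and follows essentially the same route as the paper: a case analysis on whether $\{i,i+1\}$, $\{i-1,i\}$, or $\{i+1,i+2\}$ belongs to the matching, using the half-difference bound for the matched case and the forced flow direction (from heavier to lighter, guaranteed by monotonicity) for the boundary edges. No gaps; your write-up just makes the $y'(i)-y'(i+1) = y(i)-y(i+1)-2\alpha \ge 0$ computation explicit where the paper leaves it implicit.
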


\begin{figure}
    \centering
    \includegraphics[scale=1.2,page=9]{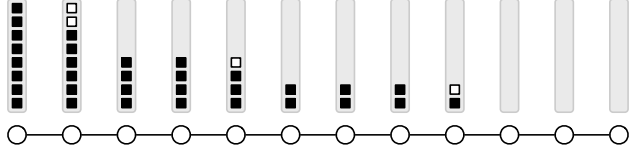}
    \caption{The proof of Lemma~\ref{lem:mab-monotone} in Section~\ref{ssec:neg-mab}. In this example, $L = 8$ and the load distribution is monotone. A match-and-balance algorithm can only move at most $L/2 = 4$ units of load (the highlighted tokens).}\label{fig:mab-monotone}
\end{figure}

\begin{proof}
Assume that the current load configuration $y$ is monotone. Let $M$ be a matching and let $y'$ be the load configuration after balancing over $M$. Consider nodes $i$ and $i+1$. Initially $y(i) \ge y(i+1)$; we will prove by a case analysis that $y'(i) \ge y'(i+1)$:
\begin{enumerate}[noitemsep]
    \item $\{i,i+1\} \in M$: we will have $y'(i) \ge y'(i+1)$.
    \item $\{i,i+1\} \notin M$:
    \begin{itemize}[noitemsep]
        \item $\{i,i-1\} \notin M$: we will have $y'(i) = y(i)$.
        \item $\{i,i-1\} \in M$: we will have $y'(i) \ge y(i)$.
        \item $\{i+1,i+2\} \notin M$: we will have $y'(i+1) = y(i+1)$.
        \item $\{i+1,i+2\} \in M$: we will have $y'(i+1) \le y(i+1)$.
    \end{itemize}
\end{enumerate}
In each case $y'(i) \ge y'(i+1)$.
\end{proof}

In a monotone configuration, we can only move $O(L)$ units of load per round---see Figure~\ref{fig:mab-monotone}.

\begin{lemma}\label{lem:mab-work}
    Any match-and-balance algorithm $A$ can move at most $L/2$ units of load in a single round on path $P$ with a monotone load configuration.
\end{lemma}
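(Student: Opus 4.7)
The plan is to exploit monotonicity to telescope the allowed load transfers. By Lemma~\ref{lem:mab-monotone}, the current configuration $y$ satisfies $y(-n) \ge y(-n+1) \ge \dots \ge y(n)$. Since $M$ is a matching on the path, it consists of disjoint edges of the form $\{i_j, i_j+1\}$ for some strictly increasing sequence $i_1 < i_2 < \dots < i_k$ with $i_{j+1} \ge i_j + 2$. By monotonicity, for each such edge we have $y(i_j) \ge y(i_j+1)$, so the algorithm may only push load from $i_j$ to $i_j+1$, and the amount pushed is at most $(y(i_j) - y(i_j+1))/2$.

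Summing over the matched edges, the total amount of load moved in this round is bounded by
\[
    \sum_{j=1}^{k} \frac{y(i_j) - y(i_j+1)}{2}.
\]
The key observation is that the intervals $[i_j, i_j+1]$ are pairwise disjoint and all nested inside $[-n, n]$, and all differences $y(i) - y(i+1)$ are nonnegative by monotonicity. Therefore the sum $\sum_{j=1}^{k} (y(i_j) - y(i_j+1))$ is bounded by the full telescoping sum $\sum_{i=-n}^{n-1} (y(i) - y(i+1)) = y(-n) - y(n) \le L$, since loads lie in $[0,L]$.

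Combining the two bounds gives the desired inequality: the total load moved in a single round is at most $L/2$. The main subtlety (and essentially the only thing to check) is that disjointness of the matching edges is what allows us to upper-bound the partial sum by the full telescoping sum; this is where monotonicity, guaranteed by Lemma~\ref{lem:mab-monotone}, is crucial, since without it individual differences could be negative and the partial sum could exceed the total variation.
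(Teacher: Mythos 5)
Your proof is correct and follows essentially the same route as the paper: monotonicity makes all edge differences nonnegative, so they telescope to at most $y(-n)-y(n) \le L$, and each matched edge moves at most half its difference, giving the bound $L/2$. (Your appeal to disjointness is not even needed---it suffices that the matched edges form a subset of all edges and each difference is nonnegative.)
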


\begin{proof}
   Since $A$ maintains a monotone load configuration, the sum of the load differences over all edges is at most $L$. Therefore even if $M$ contains all edges with a non-zero load difference, the algorithm can move only at most $L/2$ units of load per round in total.
\end{proof}

\begin{proof}[Proof of Theorem~\ref{thm:neg-mab}]
We will consider the input vector $x$ where $x(i) = L$ for $i \le 0$ and $x(i) = 0$ otherwise. The vector is monotone and hence it remains monotone throughout the execution of~$A$. Consider the output of node $0$. There are two cases; see Figure~\ref{fig:neg-mab}:
\begin{enumerate}[label=(\alph*)]
    \item The output of node $0$ is at most $h = L/2$. Now for each $i = 0,1,\dotsc,h-1$, we can observe that the load of node $-i$ has decreased by at least $h-i$ units, and by monotonicity, all of this load has been moved to the right. In particular, for each $i$ we have moved $h-i$ units of load from node $-i$ over at least $i+1$ edges. The total amount of work done by the nonpositive nodes is at least the tetrahedral number
    $
        1\cdot h + 2\cdot (h-1) + \dotso + h \cdot 1 = \Theta(h^3) = \Theta(L^3).
    $
    \item The output of node $0$ is at least $h = L/2$. Now for each $i = 0,1,\dotsc,h-1$, we can observe that the load of node $i$ has increased by at least $h-i$ units, and by monotonicity, all of this load has been moved from the left. The total amount of work done by the nonnegative nodes is at least $\Theta(L^3)$.
\end{enumerate}
By Lemma~\ref{lem:mab-work}, moving $\Theta(L^3)$ units of load takes $\Omega(L^2)$ rounds.
\end{proof}

\begin{figure}[ht]
    \centering
    \includegraphics[scale=1.2,page=3]{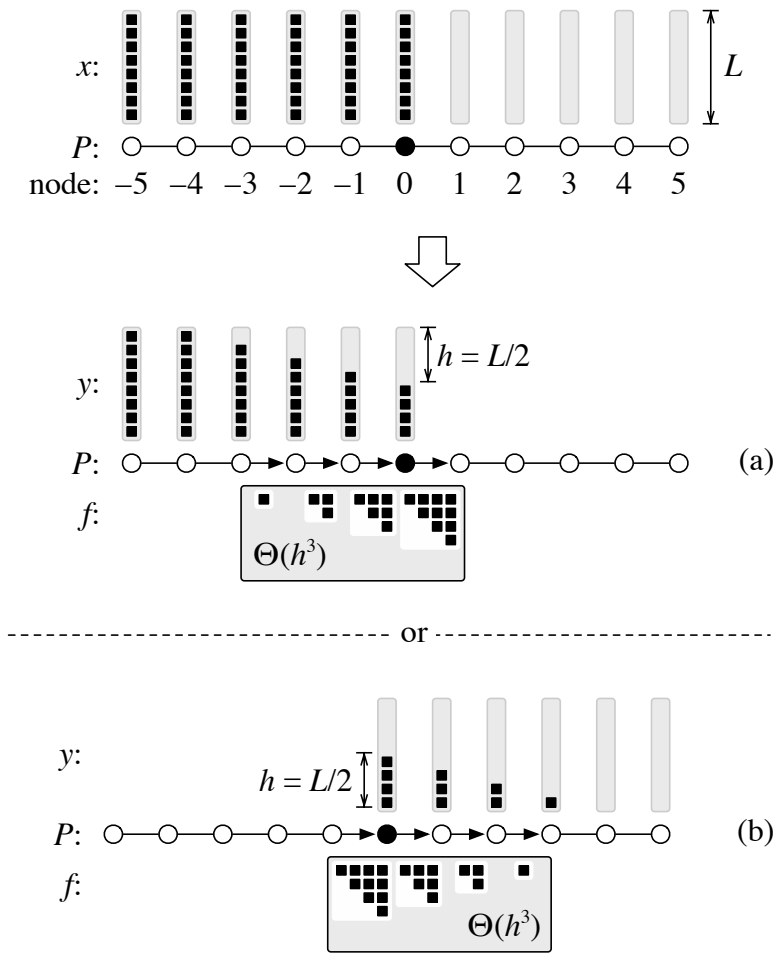}
    \caption{The proof of Theorem~\ref{thm:neg-mab} in Section~\ref{ssec:neg-mab}. We construct input $x$, run any match-and-balance algorithm, and have a case analysis based on the output of node $0$:
    (a)~The load of node $0$ decreases by at least $h$, and the nonpositive nodes do $\Omega(h^3)$ units of work in total to push load to the right.
    (b)~The load of node $0$ is still at least $h$, and the nonnegative nodes do $\Omega(h^3)$ units of work in total to pull load from the left.}\label{fig:neg-mab}
\end{figure}

\subsection{Careful algorithms}\label{ssec:neg-careful}

Recall that careful algorithms move $O(L)$ units of load per round---this includes, for example, all match-and-balance algorithms, as well as many other natural algorithms that simulate the physical process of collapsing piles of tokens.

\thmnegcareful*

\begin{proof}
Construct the input $(G,x)$ as shown in Figure~\ref{fig:neg-careful}: We have a tree $G_u$ rooted at $u$, a tree $G_v$ rooted at $v$, plus an edge $\{u,v\}$. Both trees are of depth $L/4$; each non-leaf node has $d-1$ children. All nodes of $G_u$ have an input load of $0$, and all nodes of $G_v$ have an input load of $L$.

\begin{figure}[ht]
    \centering
    \includegraphics[scale=1.2,page=7]{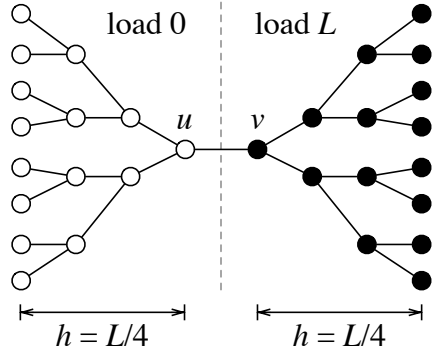}
    \caption{The proof of Theorem~\ref{thm:neg-careful} in Section~\ref{ssec:neg-careful}. In this example, $d = 3$ and $L = 12$.}\label{fig:neg-careful}
\end{figure}

Now consider any solution $(y,f)$. If $y(u) \ge L/4$, then all nodes of $G_u$ have a load of at least $1$, and there are $d^{\Omega(L)}$ nodes in $G_u$. All of the load has been moved across the edge $\{u,v\}$, and hence $f(v,u) = d^{\Omega(L)}$. Otherwise $y(u) < L/4$, and $y(v) < L/4+1$. In this case all nodes of $G_v$ have a load of at most $L-1$, and again we can conclude that $f(v,u) = d^{\Omega(L)}$.

A careful algorithm starts with $y \gets x$ and $f \gets 0$ and changes each element of $f$ by at most $\poly(L)$ in each round. Hence any careful algorithm has to spend $d^{\Omega(L)}$ for this instance.
\end{proof}

\subsection{Oblivious algorithms}\label{ssec:neg-oblivious}

Recall that in an oblivious algorithm, the total amount of load moved from node $u$ to $v$ only depends on the initial load of $u$ and the distance between $u$ and $v$. For example, the algorithm that computes the moving average in an infinite path is an oblivious algorithm. We show that such algorithms do not exist for infinite regular trees of a degree larger than~$2$.

\thmnegoblivious*

\begin{proof}
We say that a node is \emph{full} if it has a load of $L$, and \emph{empty} if it has a load of $0$. We say that a subtree is full if all nodes in it are full, and a subtree is empty if all nodes in it are empty.

Construct the input $(G,x)$ as shown in Figure~\ref{fig:neg-oblivious}a:
\begin{itemize}[noitemsep]
    \item $G$ is the $d$-regular infinite tree,
    \item $\{u,v\}$ is an edge of $G$,
    \item each node $w$ that is closer to $u$ than $v$ is empty,
    \item each node $w$ that is closer to $v$ than $u$ is full.
\end{itemize}
We will consider the infinite tree $G$ rooted at either $u$ or $v$.  If we root it at $u$, then $u$ is adjacent to $1$ full subtree and $d-1$ empty subtrees. If we root it at $v$, then $v$ is adjacent to $d-1$ full subtrees and $1$ empty subtree. See Figure~\ref{fig:neg-oblivious}a for illustrations.

\begin{figure}[ht]
    \centering
    \includegraphics[scale=1.2,page=4]{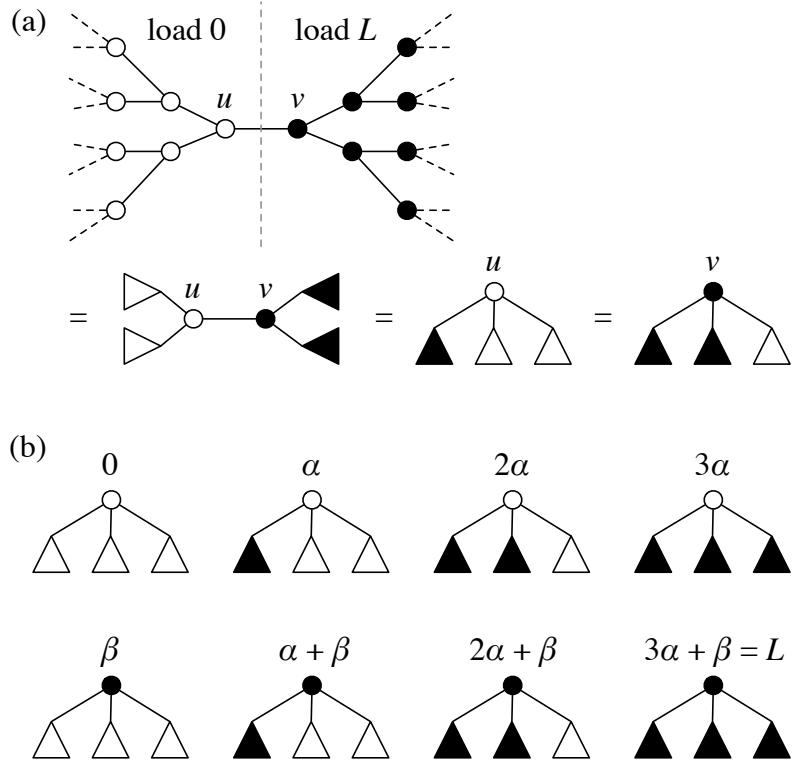}
    \caption{The proof of Theorem~\ref{thm:neg-oblivious} in Section~\ref{ssec:neg-oblivious}. In this example, $d = 3$. (a)~The input graph $G$ is an infinite $d$-regular tree rooted at $\{u,v\}$; one half has an input load of $0$, and the other half has an input load of $L$. (b)~In the output, each full subtree contributes $\alpha$ units of load, and the node itself contributes $\beta$ units of load.}\label{fig:neg-oblivious}
\end{figure}

Let $g(r)$ be the amount of load that the oblivious algorithm moves from a full node $w$ to any node that is at distance $r$ from $w$. Define the shorthand notation
\[
    \alpha = \sum_{r=0}^{\infty} (d-1)^r g(r+1),
\]
which has two equivalent interpretations in rooted infinite $d$-regular trees:
\begin{itemize}[noitemsep]
    \item a full root node sends in total $\alpha$ units of load to each subtree,
    \item the root node receives $\alpha$ units from each full subtree.
\end{itemize}
See Figure~\ref{fig:neg-oblivious}b. In total, a full node gives $d\alpha$ units of load to other nodes, so it leaves
\[
    \beta = L - d\alpha
\]
units of load for itself. It is easy to verify that we must have $\beta \ge 0$ and hence $\alpha \le L/d$; otherwise there would be inputs with negative outputs.

Now we are ready to put the pieces together. Node $u$ receives $\alpha$ units of load from its only full subtree, while $v$ receives $(d-1)\alpha$ units of load from its $d-1$ full subtrees; moreover, $v$ leaves $\beta$ units of load for itself. The load difference between $u$ and $v$ is therefore at least
\[
    (d-1)\alpha + \beta - \alpha = L - 2\alpha \ge \frac{d-2}{d} L.
\]
Hence for any $d \ge 3$ and for a sufficiently large $L$, edge $\{u,v\}$ will be unhappy in $G$, no matter which oblivious algorithm we apply.
\end{proof}

\section{Positive results}

We will now prove the positive results of Theorems \ref{thm:pos-paths}, \ref{thm:pos-fractional}, and \ref{thm:pos-discrete}.

\subsection{Discrete load balancing in paths and cycles}
\label{ssec:pos-paths}

We first give an algorithm that exactly matches the lower bound of Theorem~\ref{thm:neg-linear}.

\thmpospaths*

\paragraph{Infinite directed paths.}

We will first show how to do load balancing in an \emph{infinite path with a consistent orientation}. That is, each node $v$ has a degree of $2$, and it can refer to its \emph{left neighbour} $v-1$ and \emph{right neighbour} $v+1$ in a globally consistent manner.

We will interpret the path with tokens as a $2$-dimensional grid, indexed by $(v,i)$, where $v \in V$ is a node and $i \in \{1,\dotsc,L\}$ is a possible location for a token. We say that $(v,i)$ is a \emph{slot}. Initially, slot $(v,i)$ holds a \emph{token} if $x(v) \ge i$. Our plan is to move the tokens around in the grid so that we maintain the following stability conditions---see Figure~\ref{fig:slots} for an illustration.

\begin{figure}[ht]
    \centering
    \includegraphics[scale=1.2,page=10]{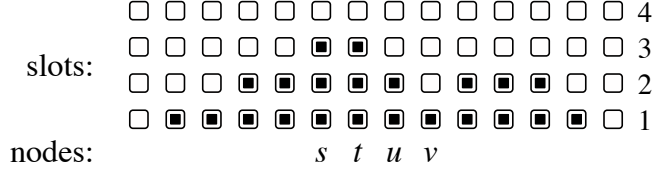}
    \caption{Stability. Token $(s,3)$ is $2$-stable, as there exists a token in slot $(u,2)$, where $u = s+2$, i.e., the node $2$ steps right from $s$. Also $(s,2)$ and $(s,1)$ are $2$-stable. However, this configuration is not $2$-stable: token $(t,3)$ is not $2$-stable, as there is an empty slot $(v,2)$. It can be verified that the configuration is $0$-stable, $1$-stable, and $(-1)$-stable.}\label{fig:slots}
\end{figure}

\begin{definition}
    A token in slot $(v,i)$ is $k$-stable if $i = 1$ or there is a token in slot $(v+k,i-1)$. A configuration is $k$-stable if all tokens are $k$-stable. For a set $K$, a configuration is $K$-stable if it is $k$-stable for all $k \in K$.
\end{definition}

We write $\llbracket a,b\rrbracket = \{a, a+1, \dotsc, b\}$. Initially, the configuration is $0$-stable. If we can find a $\llbracket -1,1\rrbracket$-stable configuration, we can construct a feasible solution to the load balancing problem by simply setting $y(v)$ to be equal to the number of tokens in slots $(v,\cdot)$.

However, we will now design an $O(L)$-time algorithm with a \emph{stronger} stability condition: it will compute a $\llbracket-3,3\rrbracket$-stable configuration. Informally, we smooth out the load distribution so that the slope of the load curve is at most $1/3$. This extra slack will be helpful when we eventually want to solve the problem in paths without consistent orientations.

This algorithm is based on the concept of \emph{pushes}. For a node $v$ and integer $\ell$, define the $\ell$-diagonal of $v$ as the following list of slots (see Figure~\ref{fig:push}):
\[
    S(v,\ell) = \bigl( (v-\ell,1), (v-2\ell,2), \dotsc, (v-L\ell,L) \bigr)
\]
In an $\ell$-push we redistribute the tokens in each $S(v,\ell)$: if there are $k$ tokens in $S(v,\ell)$, then we redistribute the tokens so that the first $k$ elements of $S(v,\ell)$ are occupied and the remaining $L-k$ elements are empty (see Figure~\ref{fig:push}). In essence, we let the tokens slide along each diagonal so that they are piled on the bottom of each diagonal.

\begin{figure}[ht]
    \centering
    \includegraphics[scale=1.2,page=11]{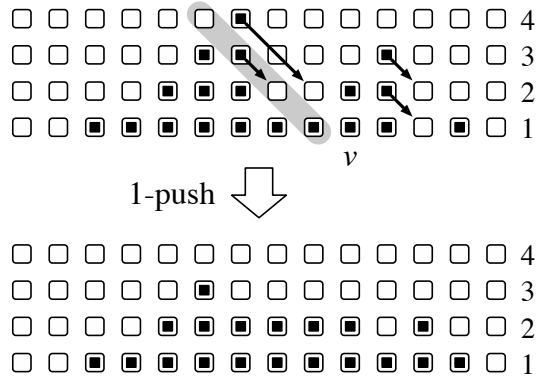}
    \caption{Pushing. The $1$-diagonal of $v$ is highlighted. In a $1$-push, we redistribute the tokens in each $1$-diagonal so the end result will be a $1$-stable configuration. Note that this configuration was already $0$-stable and $(-1)$-stable, and it remained $0$-stable and $(-1)$-stable after a $1$-push. In general, whatever stability we have already achieved by pushing is never lost in subsequent pushes.}\label{fig:push}
\end{figure}

An $\ell$-push can be efficiently implemented in time $O(\ell L)$ with a distributed algorithm: for example, node $v$ is responsible for redistributing the tokens in slots $S(v,\ell)$, and we first use $O(\ell L)$ rounds so that each node $v$ can discover everything related to $S(v,\ell)$, and then another $O(\ell L)$ rounds so that node $v$ can inform the relevant nodes regarding how to move tokens in $S(v,\ell)$.

Clearly, after an $\ell$-push we will have an $\ell$-stable configuration. The non-trivial part is that $\ell$-pushes do not interfere with any stability that we have previously achieved.

\begin{restatable}{lemma}{lempush}\label{lem:push}
For every choice of integers $\ell$ and $k$, if a configuration is $k$-stable, then it is still $k$-stable after an $\ell$-push.
\end{restatable}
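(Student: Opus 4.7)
My plan is to reason about each $\ell$-diagonal as a unit and reduce the statement to a comparison between token counts in appropriately offset diagonals. First, I would observe that every slot $(v,i)$ belongs to exactly one $\ell$-diagonal, namely the diagonal indexed by $w = v + i\ell$, and that the $\ell$-push preserves the token count $N(w)$ of each diagonal $S(w,\ell)$; it merely slides the $N(w)$ tokens of $S(w,\ell)$ down onto its first $N(w)$ slots. Thus after the push, the slot $(v,i)$ contains a token if and only if $i \le N(v+i\ell)$.

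Second, and this is the crux, I would translate $k$-stability of the original configuration into a comparison between diagonals. The partner of $(v,i)$ witnessing $k$-stability is $(v+k, i-1)$, which lies in the $\ell$-diagonal indexed by $(v+k) + (i-1)\ell = (v+i\ell) + (k-\ell) = w + k - \ell$. So $k$-stability says: whenever $(w - j\ell, j)$ holds a token with $j \ge 2$, then $(w - j\ell + k, j-1)$ holds a token, and this latter slot sits at row $j-1$ of diagonal $w + k - \ell$. Since the map $j \mapsto j-1$ is injective, this gives an injection from the tokens of diagonal $w$ at rows $\ge 2$ into the tokens of diagonal $w+k-\ell$. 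As diagonal $w$ can hold at most one token at row $1$, we conclude
\[
    N(w + k - \ell) \;\ge\; N(w) - 1 \qquad \text{for every } w.
\]

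Finally, I would combine the two pieces. Pick any token at $(v,i)$ with $i \ge 2$ in the post-push configuration, and let $w = v + i\ell$. By the first step, $i \le N(w)$, and by the second step, $N(w + k - \ell) \ge N(w) - 1 \ge i - 1$. But $(v+k, i-1)$ is the row-$(i-1)$ slot of diagonal $w + k - \ell$, so it lies among the first $N(w+k-\ell)$ slots of that diagonal and therefore holds a token after the push. This is exactly the $k$-stability witness for $(v,i)$.

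The only delicate point is the index bookkeeping: correctly identifying the diagonal that contains the $k$-stability partner (the offset is $k - \ell$, not $k$), and accounting for the at-most-one boundary token at row $1$ that cannot participate in the injection. Once those are nailed down, the lemma falls out mechanically, and notably the argument is symmetric in the sign of $\ell$ and $k$, so it covers negative values as well.
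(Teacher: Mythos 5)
Your proof is correct and is essentially the paper's argument: both reduce the claim to comparing token counts of the diagonal containing $(v,i)$ with the diagonal (offset $k-\ell$) containing $(v+k,i-1)$, using the observation that the $k$-stability partner of every token above row $1$ of the first diagonal is a token of the second, so the second diagonal has at least one fewer token, and the post-push occupancy rule then yields stability. Your version merely makes the injection and the diagonal indexing explicit and thereby absorbs the $k=\ell$ case that the paper dismisses as trivial.
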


\begin{proof}
The case $k = \ell$ is trivial; hence we assume that $k \ne \ell$. Consider slots $a = (v,i)$ and $b = (v+k,i-1)$; see Figure~\ref{fig:pushlemma}. We need to argue that if $a$ holds a token after an $\ell$-push, then $b$ will also hold a token after an $\ell$-push. To this end, let $A$ be the $\ell$-diagonal that contains $a$, and let $B$ be the $\ell$-diagonal that contains $b$. Now by definition, after an $\ell$-push, slot $a$ is occupied if and only if there were at least $i$ tokens in $A$.

\begin{figure}[ht]
    \centering
    \includegraphics[scale=1.2,page=12]{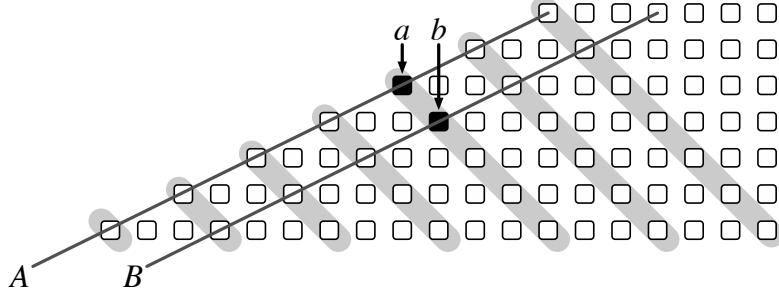}
    \caption{The proof of Lemma~\ref{lem:push}, for $\ell = -2$ and $k = 1$. The configuration was $1$-stable, i.e., the gray $1$-diagonals filled starting from the bottom. Now we do a $(-2)$-push, and want to argue that the configuration will be still $1$-stable. Consider slot $a$. It will be filled iff there are at least $5$ tokens in the $(-2)$-diagonal $A$. But this implies that there are at least $4$ tokens in the diagonal $B$, and hence the slot $b$ will be filled, too.}\label{fig:pushlemma}
\end{figure}

The key observation is that $k$-stability implies that for every token in $A$ there is a token in $B$, with the exception of the first token---if $(u,j) \in A$ holds a token and $j > 1$ then $(u+k,j-1) \in B$ holds a token as well. In particular, if there were at least $i$ tokens in $A$, there were at least $i-1$ tokens in $B$, and hence $b$ will also hold a token.
\end{proof}

Now we can easily find a $\llbracket-3,3\rrbracket$-stable configuration in time $O(L)$: the algorithm simply does an $\ell$-push for each $\ell \in \llbracket-3,3\rrbracket$, sequentially, in an arbitrary order. We will call this algorithm~$A_1$.

\paragraph{Finite directed paths and cycles.}

Algorithm $A_1$ finds a $\llbracket-3,3\rrbracket$-stable configuration in infinite directed paths in time $O(L)$. To handle \emph{finite} directed paths we could extend the algorithm and its analysis so that it takes into account the boundary effects. However, this would be a bit boring---instead, we will show that we can simply take $A_1$ and use it as a black box.

Let us first adjust the stability condition so that it makes sense on finite paths: a token $(v,i)$ is considered $k$-stable also if node $v+k$ does not exist.

Let $T_1 = \Theta(L)$ be the worst-case running time of $A_1$. We use it to construct an algorithm $A_2$ that finds a $\llbracket-3,3\rrbracket$-stable configuration for a finite path $G$, as follows:
\begin{enumerate}
    \item Check if the path is of length at most $4T_1$; if so, we solve the problem by brute force in time $O(L)$, and stop.
    \item Each endpoint $u$ gathers all tokens up to distance $2T_1$ and redistributes them so that all nodes within distance at most $T_1$ from $u$ have the same constant load; let us denote this constant $c(u)$.
    \item Construct a virtual graph $G'$ as follows: each endpoint $u$ pretends that the path continues with infinitely many additional dummy nodes, each with the same constant load $c(u)$.
    \item Simulate algorithm $A_1$ in the virtual graph $G'$.
    \item Discard the dummy nodes.
\end{enumerate}
It is easy to verify that the $A_1$ will never move any tokens across an endpoint, as its neighbourhood was already well-balanced. Therefore if we remove the dummy nodes, we have a feasible solution for $G$. Moreover, the running time of $A_2$ is still $O(L)$.

It is also easy to see that $A_2$ works correctly in directed \emph{cycles}; the first three steps simply do nothing as there are no endpoints.

\paragraph{Undirected paths and cycles.}

So far we have designed an algorithm $A_2$ that finds a $\llbracket-3,3\rrbracket$-stable configuration in paths and cycles with a globally consistent orientation. Now we show how to use it to design an algorithm $A_3$ that finds a $\llbracket-1,1\rrbracket$-stable configuration in paths and cycles without an orientation.

It can be shown that \emph{some} form of local symmetry-breaking is needed. We will use the familiar \emph{port-numbering model}: Each node $v$ has up to two communication ports, labelled with $(v,1)$ and $(v,2)$. The ports are identified with the endpoints of the edges; each edge joins a pair of ports. The port numbers at the endpoints of an edge do not need to match---for example, an edge $\{u,v\}$ may join $(u,1)$ to $(v,1)$ or $(u,1)$ to $(v,2)$.

In algorithm $A_2$, we construct a \emph{virtual graph} $G'$ as shown in Figure~\ref{fig:virtpath}: Each node $v$ splits itself in two virtual nodes, $v_1$ and $v_2$. The virtual nodes also have two ports. For each edge $e = \{u,v\}$, depending on the type of $e$ we connect the virtual nodes of $u$ and $v$ as follows:
\begin{itemize}[noitemsep]
    \item $e$ joins $(u,1)$ to $(v,1)$: connect $(u_1,1)$ to $(v_2,2)$ and $(u_2,2)$ to $(v_1,1)$,
    \item $e$ joins $(u,1)$ to $(v,2)$: connect $(u_1,1)$ to $(v_1,2)$ and $(u_2,2)$ to $(v_2,1)$,
    \item $e$ joins $(u,2)$ to $(v,1)$: connect $(u_1,2)$ to $(v_2,1)$ and $(u_2,1)$ to $(v_1,2)$,
    \item $e$ joins $(u,2)$ to $(v,2)$: connect $(u_1,2)$ to $(v_1,1)$ and $(u_2,1)$ to $(v_2,2)$.
\end{itemize}
If $G$ was a path with $n$ nodes, then $G'$ consists of two disjoint paths with $n$ nodes each. If $G$ was an $n$-cycle, then $G'$ consists of either one cycle with $2n$ nodes or two cycles with $n$ nodes each.

\begin{figure}[ht]
    \centering
    \includegraphics[scale=1.2,page=13]{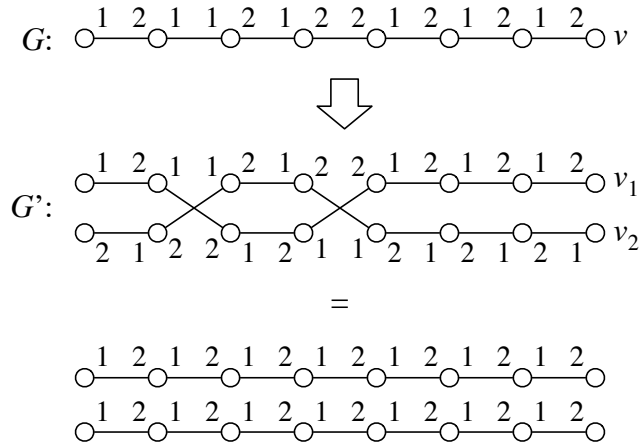}
    \caption{Given any path $G$ with some port numbering, we can construct a virtual graph $G'$ that consists of two paths, both of which have a \emph{consistent} port numbering.}\label{fig:virtpath}
\end{figure}

The key observation is that there is a \emph{consistent} port numbering in $G$: port $1$ of a virtual node is always connected to port $2$ of an adjacent virtual node. We can now interpret the ports so that in each virtual node port $1$ points ``left'' and port $2$ points ``right''.

Each node first splits its input load arbitrarily between its virtual copies. Then we run algorithm $A_2$ to find a $\llbracket-3,3\rrbracket$-stable configuration in the virtual graph, and then map all tokens back to the original graph: the new load of $v$ is the sum of the new loads of $v_1$ and $v_2$; see Figure~\ref{fig:pathfix}.

\begin{figure}[ht]
    \centering
    \includegraphics[scale=1.2,page=14]{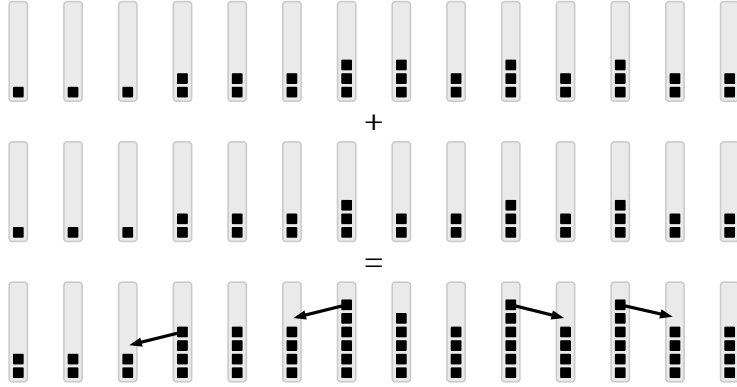}
    \caption{The sum of two $\llbracket-3,3\rrbracket$-stable configurations can be easily turned into a $\llbracket-1,1\rrbracket$-stable configuration with local modifications.}\label{fig:pathfix}
\end{figure}

Now we have a configuration where the maximum load difference between a pair of adjacent nodes is $2$. However, the load is \emph{approximately well-balanced}: a load difference of more than $2$ implies a distance of at least $4$. Therefore we can easily find a $\llbracket-1,1\rrbracket$-stable configuration in $O(1)$ time with local operations (see Figure~\ref{fig:pathfix}). For example, we can apply a match-and-balance algorithm: find a maximal matching $M$ of unhappy edges and move a token over each edge. Conveniently, all edges become happy, including those that were not in~$M$. It is easy to find a maximal matching $M$ in $O(1)$ time, as this is in essence maximal matching in a bipartite graph of maximum degree $2$: on one side we have the nodes that are ``too low'' and on the other side we have the nodes that are ``too high'' in comparison with their neighbours.

In summary, we can find a $\llbracket-1,1\rrbracket$-stable configuration in any path or cycle in time $O(L)$, and therefore we can do discrete load balancing in any path or cycle in time $O(L)$.

\subsection{Discrete load balancing in general graphs}\label{ssec:pos-discrete}

We will now show how to do discrete load balancing in graphs of maximum degree $\Delta$.

\thmposdiscrete*

Again, we will imagine that each node $v$ has $L$ \emph{slots}, labelled $(v,\cdot)$, and each token is placed in one of the slots. Initially slots $(v,1), (v,2), \dotsc, (v,x(v))$ are occupied with tokens.

We define the \emph{(downward) cone} $C(v,i)$ of slot $(v,i)$ as the set of slots $(u, j) \neq (v, i)$ such that $i-j \ge \dist(v, u)$; see Figure~\ref{fig:cone}. In the algorithm, if there is a token in $(v,i)$ and all slots of the cone $C(v,i)$ are full, then we say that the token is \emph{stable}, and we \emph{freeze} it, i.e. it will never be moved again.

\begin{figure}[ht]
    \centering
    \includegraphics[scale=1.2,page=15]{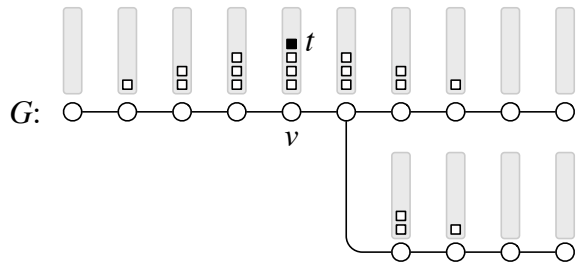}
    \caption{The downward cone $C(v,4)$ of the token $t = (v,4)$ consists of the slots denoted with white boxes.}\label{fig:cone}
\end{figure}

In the algorithm we try to match the highest unfrozen tokens with the free slots in their cones. If they succeed then they move to these slots; otherwise they can be frozen.

We now give the pseudo-code of the algorithm in a centralised way, prove the correctness of the algorithm, and then show that it is actually a local algorithm. The algorithm proceeds as follows:
\begin{enumerate}
	\item All stable tokens of the initial configuration are frozen.
	\item For each $h = L, L-1,\dotsc,1$:
	\begin{enumerate}
		\item Construct the virtual bipartite graph $F_h = (T \cup S, E)$, where $T$ consists of unfrozen tokens at level $h$, $S$ consists of all empty slots at levels below $h$, and there is an edge $\{t,s\}$ if $s \in S$ is an empty slot in the cone of token $t \in T$.
		\item In $F_h$, find a maximal matching $M$.
		\item For every unfrozen token $t$ at level $h$: if the token is matched with a slot $s$ in $M$, move the token to slot $s$, otherwise freeze it. 
		\item Collapse the tokens so that for each node $v$ that holds $k$ tokens, the tokens are in the slots $(v,1), (v,2), \dotsc, (v,k)$.
	\end{enumerate}
\end{enumerate} 

First, remark that we maintain the invariant that at round $h$, all load in slots at height~$h$ either moves down or is safely frozen.  Indeed, if a token is not matched, then all slots in its cone will be full at the end of the loop, and if it is matched, it moves to a strictly lower level, thereafter the invariant is true for level $h$ and maintained for the levels above. At the end of the algorithm all the tokens are frozen, thus the configuration is stable.

We stated the algorithm in a centralised manner, but it is actually local: The vertices only need the knowledge of their radius-$L$ neighbourhood to find their neighbours in graph $F_h$. Graph $F_h$ has a maximum degree of $O(L\Delta^L)$. Therefore we can find a maximal matching in $F_h$ by simulating $O(L\Delta^L)$ rounds of the proposal algorithm \cite{hanckowiak98distributed} in the virtual graph $F_h$. The simulation has a multiplicative $O(L)$ overhead---adjacent nodes in $F_h$ are at distance $O(L)$ in graph $G$. Finally, we have $O(L)$ iterations, giving the overall complexity of $O(L^3 \Delta^L)$.

\subsection{Fractional load balancing in general graphs}\label{ssec:pos-fractional}

In fractional load balancing, we can use the same basic idea as what we had in the discrete case, but much faster:

\thmposfractional*

Our algorithm follows the same basic structure as the discrete algorithm of Section~\ref{ssec:pos-discrete}. However, in each bipartite virtual graph $F_h$, we compute an $\varepsilon$-maximal fractional matching. With the algorithm by Khuller et al.~\cite{khuller94primal-dual}, this can be done in $O(\log \frac{1}{\varepsilon} + \log \Delta)$ rounds, which gives us an exponential speedup over the $O(\Delta)$-round algorithm for maximal bipartite matching.

\paragraph{Almost maximal fractional matchings.}

Let the bipartite graph be $G = (V, E)$ with $V = T \cup S$ and $\Delta$ be the maximum degree. Each node has a \emph{capacity} $c \colon V \to [0,1]$. A fractional matching is a function $y\colon E \to [0,1]$ such that for each node, the sum $y[v] = \sum_{e \ni v} y(e)$ is at most $c(v)$. A fractional matching is \emph{$\varepsilon$-maximal} if
\[
\max_{e \in E} \min \{ c(v) - y[v] : v \in e \} \le \varepsilon,
\]
that is, there is no edge $e$ with a value $y(e)$ that could be increased by more than $\varepsilon$.

\paragraph{Algorithm for fractional load balancing.}

Next we describe the algorithm for finding a fractional load balancing. As before, we have \emph{slots} labelled with $(v,i)$; here $v$ is a node and $i$ is the \emph{level} of the slot. However, now each slot may contain fractional units of load. We adapt the definition of stability to fractional load balancing in a natural manner: we say that $\alpha$ units of the load in slot $(v,i)$ is stable if each $(u,j)$ with $i-j=d(v,u)$ has at least $\alpha$ units of load, and each $(u,j)$ with $i-j>d(v,u)$ is full. In the algorithm we will \emph{freeze} some parts of the load. We use $\ell_i(v)$ to denote the total amount of load in slot $(v,i)$, and $f_i(v)$ to denote the amount of frozen load in slot $(v,i)$. 

The algorithm would be simpler to analyse if we had a maximal fractional matching algorithm, but we only have an efficient $\varepsilon$-maximal one. Our strategy is to round the load, and by doing it, we accumulate \emph{surplus} and \emph{deficit}. This way we can analyse easily the iterations of the algorithm. We keep track of surplus and deficit, and at the end of the algorithm we readjust the loads.

The algorithm for fractional load balancing works as follows. First, double all load; this way we have $2L$ slots per node. Based on the new input, freeze all stable load. Then, for each $h = 2L, 2L-1,\dotsc,1$ perform the following steps:
\begin{enumerate}
    \item Construct the virtual bipartite graph $F_h = (T \cup S, E)$, where $T$ consists of slots with unfrozen load at level $h$, $S$ consists of all slots at levels below $h$, and there is an edge $\{t,s\}$ if $s \in S$ is a non-full slot in the cone of slot $t \in T$.
    \item Define the capacities of $F_h$ as follows: the capacity of a node $(t, h) \in T$ is its unfrozen load $c(t, h) = \ell_h(s) - f_h(t)$, and the capacity of a node $(s,i) \in S$ is its free space $c(s,i) = 1-\ell_i(s)$.
    \item Find an $\varepsilon$-maximal fractional matching $y$ in $F_h$.
    \item\label{step:move} For each edge $e = \{t,s\}$ of $F_h$, move $y(e)$ units of load from slot $t$ to slot $s$.
    \item\label{step:freeze} For each $t$ consider the two cases:
    \begin{itemize}
        \item If $t$ has load at most $\varepsilon$, round it to 0 and freeze the load in $t$. We create at most $\varepsilon$ units of deficit in slot $t$ at the moment we freeze it.
        \item If $t$ has strictly more than $\varepsilon$ load, then each unfrozen slot in the cone of $t$ has at least $1-\varepsilon$ load. We round them to 1 and freeze all load at $t$ and in $C(t)$. We create at most $\varepsilon$ units of surplus in each slot of $C(t)$ at the moment we freeze it.
    \end{itemize}
    \item Collapse all unfrozen load as low as possible.
\end{enumerate}
Finally, undo the rounding---remove the surplus and put back the deficit. Then normalise the output by dividing all load by two.

\paragraph{Analysis.}

Thanks to the rounding, the configuration after each iteration satisfies the same invariant as the discrete algorithm: at round $h$, all load in slots at height $h$ either moves down or is safely frozen. Then the configuration at the end of the iteration phase is stable and for each edge $\{u,v\}$ we have $|y(u)-y(v)|\le 1$. 

Each slot is involved at most once in the rounding, precisely at the moment we freeze it. Hence before normalisation, the difference between the real load and the rounded load is in $[-2L\varepsilon,2L\varepsilon]$ for each node. Therefore we have $|y(u)-y(v)|\le 1+4L\varepsilon$ for each edge $\{u,v\}$ before normalisation and $|y(u)-y(v)|\le 1/2+2L\varepsilon$ after normalisation. We guarantee a feasible solution by choosing $\varepsilon \le 1/(4L)$.

Khuller et al.~\cite{khuller94primal-dual} show how to find an $\varepsilon$-maximal fractional matching in time $O(\log \varepsilon^{-1} + \log d)$ in graphs of maximum degree $d$. The virtual graph $F_h$ has a maximum degree of $d = O(L\Delta^L)$, and there is an $O(L)$ overhead in the simulation of $F_h$ in $G$. Finally, we have $L$ iterations in the algorithm; in total, the running time can be bounded by
\[
O(L^2(\log \varepsilon^{-1} + \log d))
= O(L^3 \log \Delta).
\] 
This completes the proof of Theorem~\ref{thm:pos-fractional}.

\section{Conclusions}

In this work, we have introduced the problem of finding a \emph{locally optimal load balancing}, and studied its distributed time complexity. We have shown that the problem can be solved in a strictly local fashion, but to do it, one has to resort to algorithms that are very different from typical load-balancing strategies that are used in the literature. Among the key findings are:
\begin{itemize}[noitemsep]
    \item an $O(L)$-time algorithms for discrete load balancing in paths and cycles,
    \item a $\poly(L,\Delta)$-time algorithm for fractional load balancing in graphs of maximum degree $\Delta$.
\end{itemize}

The main open question is the distributed time complexity of the discrete load balancing problem. Our algorithm is local, but it has a running time exponential in $L$; the key question is whether $\poly(L,\Delta)$-time algorithms exist. We suspect that it is related to another long-standing open question---the distributed time complexity of bipartite maximal matching. Indeed, a $\polylog(\Delta)$-time algorithm for bipartite maximal matching would imply a $\poly(L,\Delta)$-time algorithm for discrete load balancing. We conjecture that such algorithms do not exist, but proving such lower bounds seems to be still beyond the reach of current techniques.

Another open question is the generalisation of the results from the LOCAL model to the CONGEST model. In particular, the polynomial-time algorithm for fractional load balancing heavily abuses the unlimited bandwidth of the LOCAL model, but it seems that there are no major obstacles for designing an analogous algorithm that works efficiently in the CONGEST model.

\section*{Acknowledgements}

We have discussed this problem and its variants over the years with numerous people, including, at least, 
Sebastian Brandt,
Pierre Fraigniaud,
Mika G\"o\"os,
Petteri Kaski,
Barbara Keller,
Janne H.\ Korhonen,
Juhana Laurinharju,
Tuomo Lempi\"ainen,
Christoph Lenzen,
Joseph S.\ B.\ Mitchell,
Pekka Orponen,
Joel Rybicki,
Thomas Sauerwald,
Stefan Schmid,
and
Jara Uitto.
Many thanks to all of you for your comments!
Computer resources were provided by the Aalto University School of Science ``Science-IT'' project.

\clearpage

\bibliographystyle{plainnat}
\bibliography{articles}

\clearpage

\end{document}